\title{Cryptanalysis of some Nonabelian \\ Group-Based Key Exchange Protocols \thanks{This research is supported by Armasuisse Science and Technology. The third author has also been supported by the Swiss National Science Foundation under grant number 212865.}}
\date{}
\author[1]{Simran Tinani}
\author[2]{Carlo Matteotti}
\author[1]{Joachim Rosenthal}
\affil[1]{\small Institute of Mathematics, University of Zurich, Winterthurerstrasse 190,  8057 Zurich} 
\affil[2]{\small Eidgenössisches Departement VBS, \small Papiermühlestrasse 20, 3003 Bern}
\DeclareMathOperator{\Aut}{Aut}
\DeclareMathOperator{\Hol}{Hol}
\DeclareMathOperator{\Mat}{Mat}
\DeclareMathOperator{\ord}{ord}
\DeclareMathOperator{\GL}{GL}
\DeclareMathOperator{\lcm}{lcm}
\newcommand{\Z}{\mathbb{Z}}
\newcommand{\F}{\mathbb{F}}
\DeclarePairedDelimiter\card{\lvert}{\rvert}
\newcommand{\fq}{{\mathbb{F}_q}}
\newtheorem{theorem}{Theorem}
\newtheorem{protocol}{Protocol}
\newtheorem{lemma}{Lemma}
\newtheorem{proposition}{Proposition}
\newtheorem{corollary}{Corollary}
\theoremstyle{definition}
\newtheorem{definition}{Definition}
\theoremstyle{remark}
\newtheorem{remark}{Remark}
\providecommand{\keywords}[1]{\textbf{\textit{Keywords---}} #1}
\begin{document}
\maketitle
 \begin{abstract}
In the recently emerging field of nonabelian group-based cryptography, a prominently used one-way function is the Conjugacy Search Problem (CSP), and two important classes of platform groups are polycyclic and matrix groups. In this paper, we discuss the complexity of the conjugacy search problem (CSP) in these two classes of platform groups using the three protocols in \cite{Gu2014ConjugacySB}, \cite{cmpdec}, and \cite{quat} as our starting point.  We produce a polynomial time solution for the CSP in a finite polycyclic group with two generators, and show that a restricted CSP is reducible to a DLP. In matrix groups over finite fields, we usedthe Jordan decomposition of a matrix to produce a polynomial time reduction of an $A$-restricted CSP, where $A \subseteq GL(\fq)$ is a cyclic subgroup, to a set of DLPs over an extension of $\fq$. We use these general methods and results to describe concrete cryptanalysis algorithms for these three systems. In particular, we show that in the group of invertible matrices over finite fields and in polycyclic groups with two generators, a CSP where conjugators are restricted to a cyclic subgroup is reducible to a set of $\mathcal{O}(n^2)$ discrete logarithm problems. Using our general results, we demonstrate concrete cryptanalysis algorithms for each of these three schemes.
 We believe that our methods and findings are likely to allow for several other heuristic attacks in the general case.

    \end{abstract}

\keywords{Group-based Cryptography, Public Key Exchange, Cryptanalysis }\\

 \section{Introduction}

The construction and realization of cryptographic systems that resist quantum attacks presently constitutes an important area of research. Apart from lattice-based, multivariate, isogeny-based and code-based cryptography, it has been proposed recently to use the rich structure of nonabelian groups to construct quantum-secure protocols for public key exchange, message encryption, and authentication. Some recent surveys on this emerging field, called group-based cryptography, can be found in Fine et al. \cite{fine2011aspects} and Myasnikov et al. \cite{myasnikov2008group}.

 The most prominent algorithmic problem employed for constructing nonabelian protocols is the Conjugacy Search Problem (CSP). While the Discrete Logarithm Problem (DLP) in a group $G$ requires the recovery of the exponent $n$ when given the group elements $g$ and $h=g^n$, the CSP requires the recovery of a conjugator $x \in G$, given the elements $g$ and $h=x^{-1} g x$. To reflect this analogy, it is common to use the notation $g^x:= x^{-1}g x$ for $g, x \in G$, which we also adopt in this paper. If the conjugator is restricted to lie in a subgroup $A\subseteq G$, we refer to the problem as an $A$-restricted CSP. In this paper we are specifically interested in the case where $A$ is cyclic. We remark that conjugation is an action of a group on itself, and thus all CSP-based protocols may be seen as special cases of the semigroup action-based framework introduced by Maze et al. in \cite{maze}.
 
The first and most prominently known protocols constructed based on the CSP were by Anshel, Anshel and Goldfeld (AAG) \cite{an99}, and Ko-Lee \cite{ko2000new}, both of whose underlying problems is a specific restricted CSP. The authors of both these systems proposed as platforms the Braid groups $B_N$. However, a number of attacks \cite{braid1}, \cite{braid2}, \cite{tsaban2015polynomial} show that the braid groups are not suitable platforms. Nevertheless, the possibility of finding another potential nonabelian platform group for CSP-based protocols is still open to research. Some other groups that have been proposed for use are polycyclic groups, metabelian groups, $p$-groups, Thompson groups, and matrix groups.  

When the underlying platform group is linear (i.e. embeds faithfully into a matrix group over a field), several polynomial time attacks exist, which focus on retrieving the private shared key without solving the CSP \cite{kreuzer2014linear}, \cite{myasnikov2015linear},  \cite{tsaban2015polynomial}, \cite{ben2018cryptanalysis}. However, in general the computation of an efficient linear representation may pose a serious roadblock for an adversary.  Further, many of these attacks are impractical to implement for standard parameter values. Such attacks are also typically protocol-specific, and this always leaves open the possibility of constructing a different protocol, again based on the CSP, where the known attacks are avoided. So far, the true difficulty of the CSP in different platforms has not been sufficiently investigated.

  In this paper, we study the complexity of various versions of the CSP in two well-known classes of linear groups: polycyclic groups and matrix groups over finite fields. Polycyclic groups were suggested for cryptographic use in \cite{eick2004polycyclic} by Eick and Kahrobaei, where some evidence was provided for resistance to some known attacks. Matrix groups are important for any nonabelian system, since whenever the platform group is linear, an efficient faithful representation reduces the underlying problem to one in a matrix group. Several proposed nonabelian cryptosystems use platforms that are special instances of polycyclic or matrix groups, and employ problems either equivalent to, or easier than, the versions of the CSP discussed in this paper. In this paper, we highlight three such independent key exchange systems published in \cite{Gu2014ConjugacySB}, \cite{cmpdec}, and \cite{quat}, and demonstrate efficient cryptanalyses of each of these.

 The paper is organized as follows. We start in Section~\ref{sec: protocols} by describing the three nonabelian group-based protocols in \cite{Gu2014ConjugacySB}, \cite{cmpdec}, and \cite{quat}, and their underlying structures and algorithmic problems. In Section~\ref{pc-sec}, we deal with polycyclic groups, giving a brief overview on their structure, discuss the algorithms and complexities for the group operations, and further discuss the complexity of the CSP in some special cases. In Subsection~\ref{pc-csp} we show that in the case with two generators, there is a polynomial time solution for the CSP, and that the $A$-restricted CSP, for a suitable choice of a cyclic subgroup $A$, is equivalent to a DLP. We also show some reductions for polycyclic groups with $3$ generators, and for the general $n$ generator case where $n-1$ generators commute. We demonstrate two examples, also showing an extension of the method to solving decomposition problems. In Subsection~\ref{pc-mat}, we discuss the consequences if a matrix representation is used in place of the standard polycyclic presentation. In Section~\ref{mat-sec}, we deal with matrix groups, and provide a reduction of the $\langle Z \rangle $-restricted CSP in $\GL_n(\fq)$ to $\mathcal{O}(n^2)$ DLPs, dealing separately with the cases where $Z$ is diagonalizable and not. Finally, in Section~\ref{sec-cryptanalysis} we demonstrate a cryptanalysis of each of the cryptosystems described in Section~\ref{sec: protocols} using the general methods in Sections~\ref{pc-sec} and \ref{mat-sec}. 
 The algebraic reductions demonstrated in this paper may also prove useful in future cryptanalysis techniques for nonabelian protocols over different platform groups.


   \section{Protocols}\label{sec: protocols}

In this section, we describe the three key exchange protocols proposed in \cite{Gu2014ConjugacySB}, \cite{cmpdec}, and \cite{quat}. Each of these schemes is based on a nonabelian linear platform that is either polycyclic or isomorphic to a matrix group. The algorithmic problem underlying the schemes of \cite{Gu2014ConjugacySB} and \cite{quat} is a restricted conjugacy search problem, whereas the problem underlying \cite{cmpdec} is the decomposition problem. The decomposition problem for cryptography was first described in \cite{shpilrain2005new}.

In Sections~\ref{pc-sec} and \ref{mat-sec}, we study these classes of groups in a general setting and discuss the complexity of the conjugacy search problem in some cases.  In Section~\ref{sec-cryptanalysis}, we will describe methods for cryptanalysis for each of these systems, using the general methods in Sections~\ref{pc-sec} and \ref{mat-sec}.


\subsection{Decomposition over Generalized Quaternions }

A generalized quaternion group is a finite polycyclic group given by the presentation
\begin{equation}\label{quatpres}
    Q_{2^n} = \langle x, y \mid x^N = 1, y^2 = x^{N/2}, yx = x^{-1}y, N= 2^{n-1}\rangle. 
\end{equation}

Clearly, any element in this group can be written uniquely as $x^{i}y^j$, where $ 0\leq i \leq N $, $0\leq j \leq 1$. 
In \cite{cmpdec}, Protocol~\ref{compdec1} was proposed in the platform $Q_{2^n}$. This protocol is based on a problem closely related to the decomposition problem, which was introduced in \cite{shpilrain2005new}.


\begin{definition}[Decomposition Problem]
 Let $G$ be a group and $g$ be a known base element. For secret elements $a,b \in G$ and $h:=agb$, recover $a', b' \in G$ such that $h=a'gb'$. 
\end{definition}

Clearly, the conjugacy search problem is a more specific form of decomposition problem, where one always has $b=a^{-1}$. 


\begin{protocol}\label{compdec1}
The public parameters are $G=Q_{2^n}$ given by \eqref{quatpres} and subgroups $A_1, A_2 \subseteq \langle x \rangle $.  \begin{enumerate}\item \begin{enumerate}    \item  Alice picks secret elements $a\in G$, $b_1, b_2 \in A_1$ and sends $x_1 = b_1 a b_2$ to Bob.    \item Bob picks secret elements $d_i \in A_2$ and sends $x_2 = d_1 x_1 d_2$ to Alice.    \item Alice sends $x_3 = b_1^{-1} x_2 b_2^{-1} (= d_1 a d_2)  $ to Bob.\end{enumerate}   \item \begin{enumerate}     \item Bob picks a secret element $c\in G$  and sends $y_1 = d_1 c d_2$ to Alice.    \item Alice sends $y_2 = b_1 y_1 b_2$ to Bob.    \item Bob sends $y_3 =b_1 c b_2 (=d_1^{-1} y_2 d_2^{-1})$ to Alice.    \end{enumerate}\end{enumerate} The shared secret is $b=ac = a(b_1^{-1} y_3 b_2^{-1})= (d_1^{-1} x_3 d_2^{-1})c$.
\end{protocol}

Clearly, an adversary who can solve for the base elements $a$ and $c$ can recover the shared secret. For this, the adversary may solve the decomposition problem for $b_1$ and $b_2$ using the public elements $x_2$ and $x_3$ (or $y_2$ and $y_3$), then subsequently recover $a$ from $x_1$, and then directly compute the shared secret $b$ from the equation above. The security of the protocol therefore relies directly on the decomposition problem in $Q_{2^n}$.

\subsection{Subgroup Conjugacy Search in Quaternions mod $p$}

In \cite{quat}, a protocol for key exchange was proposed using a ring $R=H_p$, which we will refer to as the quaternions mod $p$. We first describe the structure of this ring.

Recall that the set of all Lipchitz quaternions is defined as $L=\{a=a_1+a_2i+a_3j+a_4k : a_1, a_2, a_3, a_4 \in \Z\}.$ Similarly, the Hurwitz quaternions are given by $H=\{a=a_1+a_2i+a_3j+a_4k : a_1, a_2, a_3, a_4 \in \Z+\frac{1}{2} \Z\}.$ 

 For a prime $p$, the authors define $H_p$ as the set $\{a =a_1+a_2i+a_3 j+a_4 k \mid a_i \in \Z_p\}$.  Addition, multiplication, the norm $||\cdot||$, and conjugates are defined in the usual way as in quaternion algebras (for an exposition on arithmetic in quaternion sets, see \cite{hurwitz}), but over the base ring $\Z_p$. The quaternion $a$ is invertible in $H_p$ if and only if $||a||\neq 0 \pmod p$. 
 
 Denote by $H_p^{*}$ the set of all invertible quaternions in $H_p$. The protocol in \cite{quat} is described as follows.
 \begin{protocol}\label{quart}
 \begin{enumerate}
\item  Alice and Bob agree to choose randomly public elements $x\in H_p$ $, z\in H_p^{*}$.
\item Alice picks two secret integers $r,s \in \Z$ such that $1\leq r \leq p-1$ and $2\leq s \leq p-1$ and then computes $y_A=z^r x^s z^{-r}$, and sends $y_A \in H_p$ to Bob.
\item Bob picks two secret integers $u,v \in \Z$ such that $1\leq u \leq p-1$ and $2\leq v \leq p-1$ and then computes $y_B=z^u x^v z^{-u}$, and sends $y_B \in H_p$ to Alice.
\item     Alice computes $K_A=z^r {y_B}^s z^{-r}$ as the shared session key.
\item    Bob computes $K_B=z^u {y_A}^v z^{-u}$ as the shared session key.
\end{enumerate}

 \end{protocol}

Clearly, an adversary who can re-construct any one of the exponent pairs ($r$, $s$) and ($u$, $v$) can recover the shared secret key. Due to the exponentiation also of the base element $x$ during the key computation, this problem is not immediately equivalent to a (restricted) conjugacy search problem.

\subsection{Subgroup Conjugacy Search in Matrix Groups}

In \cite{Gu2014ConjugacySB}, the authors introduce what they call the Subgroup Conjugacy Search Problem (SCSP) in a nonabelian group $G$, and propose a protocol based on it, suggesting as potential platforms the matrix group $\GL_n(\fq)$ and a subgroup of it. We note that the base element is not required to be invertible, and therefore state below a slightly more general form of their protocol.

\begin{protocol}\label{mat-prot}
Let $X \in \Mat_n(\F_q)$ and $Z \in \Mat_n(\F_q)^{*}$ be public elements.
\begin{enumerate}
    \item Alice picks a secret integer $r\in \Z$ and publishes $Z^{-r} X Z^r$.
  \item Bob picks a secret integer $s \in \Z$ and publishes $Z^{-s} X Z^{s}$.
    \item The shared secret is $Z^{-s-r} X Z^{s+r}$.
\end{enumerate}
\end{protocol}

We note that the SCSP problem defined by the authors, i.e., the recovery of the exponent $r$ or $s$ from the public information in the protocol, corresponds exactly to the $A$-restricted CSP defined in this paper, for a cyclic subgroup $A$ of $G$.

\section{Polycyclic groups}\label{pc-sec}

The use of polycyclic groups as a potential platform for conjugacy-based cryptography was 
was first suggested by Eick and Kahrobaei \cite{eick2004polycyclic}.  Some evidence was provided for these groups resisting the length-based attacks afflicting braid group-based systems. Cavallo and Kahrobaei \cite{cavallo} constructed a family of polycyclic groups  where the conjugacy search problem is NP-complete with respect to the respective parameters. Thus, in full generality, there is no known general efficient solution to the conjugacy search problem in polycyclic groups. In fact, there appear to be several classes of polycyclic groups where this problem is infeasible, which may be used to construct various cryptosystems. A survey of polycyclic group-based cryptography can be found in \cite{gryak2016status}.

This section discusses the complexity of some variations of the conjugacy search problem in some classes of polycyclic groups. In polycyclic groups with two generators, we obtain reductions to at most a discrete logarithm problem. We emphasize that these reductions do not diminish the security in the general case. Indeed, the fact that we obtain equivalence to discrete logarithm problems in this simple case, is evidence that the problem may be infeasible for the more complex cases. 

This section also has some results on the complexity of group operations in polycyclic groups.  In particular, we show for two generators that the complexity of multiplications is $\mathcal{O}(1)$. For more generators, when each generator has finite order, we show an upper bound on the complexity. These results on the efficiency of operations in a polycyclic
group are significant in determining its suitability as a platform for cryptography. 

We start by introducing polycyclic groups and some of their basic properties. 

\subsection{Background}\label{pc-basics}
\begin{definition}[Polycyclic Group]
A polycyclic group is a group $G$ with a subnormal series $G = G_1 > G_2 > \ldots > G_{n+1} = 1$ in which every quotient $G_i/G_{i+1}$ is cyclic. This series is called a polycyclic series.
\end{definition}

\begin{definition}[Power-Conjugate Presentation] Let $G$ be a group with generators $a_1, a_2, \ldots, a_n$. Let $I \subseteq \{1,2,\ldots, n\}$ denote a list of indices and $m_i > 1$ be integers corresponding to elements $i \in I$. A power-conjugate presentation is a group presentation of the form \begin{align}\label{pcpres}
    G= \langle a_1, a_2, \ldots, a_n \mid & a_i^{m_i} = w_{ii}, \ i \in I, \nonumber \\
    & a_j^{a_i} = w_{ij}, \ 1 \leq i < j \leq n, \nonumber \\
    & a_j^{a_i^{-1}} = w_{-ij}, \ 1 \leq i < j \leq n, i\not\in I \rangle, \end{align} where the words $w_{ij}$ are of the form $w_{ij} = a_{\mid{i}\mid +1}^{l(i,j, \mid{i}\mid +1)}\ldots a_n^{l(i,j,n)}$, with $l(i,j,k)\in \Z$, and $0 \leq l(i,j,k) < m_k$ if $k\in I$.
\end{definition}

\begin{lemma}[\cite{holt2005handbook}] $G$ is polycyclic if and only if it has a power-conjugate presentation.\end{lemma}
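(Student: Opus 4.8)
The plan is to prove both implications directly from the definitions, with the reverse direction being routine and the forward direction requiring the collection process.

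\textbf{Presentation $\Rightarrow$ polycyclic.} Since any group presentation defines a group, take $G$ as in \eqref{pcpres} and set $G_i := \langle a_i, a_{i+1}, \ldots, a_n \rangle$ for $1 \le i \le n+1$, so that $G = G_1 \ge G_2 \ge \cdots \ge G_{n+1} = 1$. First I would show $G_{i+1} \trianglelefteq G_i$. As $G_i = \langle a_i \rangle G_{i+1}$, it suffices to check that conjugation by $a_i^{\pm 1}$ preserves $G_{i+1}$. The relations $a_j^{a_i} = w_{ij}$ for $i<j$ show that conjugation by $a_i$ sends each generator $a_j$ of $G_{i+1}$ into $G_{i+1}$, hence is an endomorphism of $G_{i+1}$. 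If $i \notin I$, the relations $a_j^{a_i^{-1}} = w_{-ij}$ handle $a_i^{-1}$ the same way. If $i \in I$, then $a_i^{m_i} = w_{ii} \in G_{i+1}$, so conjugation by $a_i^{m_i}$ is an inner automorphism of $G_{i+1}$; thus the endomorphism ``conjugation by $a_i$'' has a power that is bijective on $G_{i+1}$, so it is itself an automorphism of $G_{i+1}$, and conjugation by $a_i^{-1}$ preserves $G_{i+1}$ as well. Hence $G_{i+1} \trianglelefteq G_i$, and $G_i/G_{i+1}$ is generated by the image of $a_i$, so it is cyclic; therefore the series above is a polycyclic series and $G$ is polycyclic.

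\textbf{Polycyclic $\Rightarrow$ presentation.} Given a polycyclic series $G = G_1 > \cdots > G_{n+1} = 1$, choose $a_i \in G_i$ whose image generates $G_i/G_{i+1}$; an easy downward induction gives $G_i = \langle a_i, \ldots, a_n\rangle$. Let $I$ be the set of indices $i$ for which $G_i/G_{i+1}$ is finite, and $m_i = \lvert G_i/G_{i+1}\rvert$ for $i \in I$. These generators satisfy relations of the shape \eqref{pcpres}: we have $a_i^{m_i} \in G_{i+1} = \langle a_{i+1},\ldots,a_n\rangle$, which yields a word $w_{ii}$; and for $i<j$, since $a_j \in G_j \le G_{i+1} \trianglelefteq G_i$ and $a_i \in G_i$, we get $a_j^{a_i^{\pm1}} \in G_{i+1}$, yielding words $w_{ij}, w_{-ij}$, which may be normalized using the power relations so that the exponent of each $a_k$ with $k \in I$ lies in $[0, m_k)$. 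By von Dyck's theorem there is a surjective homomorphism $\phi$ from the group $\widehat{G}$ presented by \eqref{pcpres} onto $G$, sending each generator of the presentation to the corresponding $a_i$.

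\textbf{Injectivity of $\phi$.} The remaining and main point is that $\phi$ is injective, i.e. the presentation does not collapse; I expect this to be the principal obstacle. Using the power and conjugate relations, one shows by the standard \emph{collection process} that every element of $\widehat{G}$ can be rewritten in the collected form $a_1^{e_1} a_2^{e_2} \cdots a_n^{e_n}$ with $e_i \in \Z$ and $0 \le e_i < m_i$ for $i \in I$; the delicate step here is proving that the collection process terminates, which is handled by a weight argument on words together with a fixed rule for collecting the leftmost uncollected occurrence of a generator. On the other side, projecting successively modulo $G_2, G_3, \ldots$ and using that $a_i G_{i+1}$ generates the cyclic group $G_i/G_{i+1}$ (of order $m_i$, or of infinite order when $i \notin I$), one sees that each element of $G$ has a \emph{unique} expression $a_1^{e_1}\cdots a_n^{e_n}$ with $0 \le e_i < m_i$ for $i \in I$. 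Hence if $\phi(w) = 1$, writing $w$ in collected form and applying $\phi$ forces every exponent to vanish by uniqueness in $G$, so $w = 1$ in $\widehat{G}$. Therefore $\phi$ is an isomorphism, and $G$ admits a polycyclic (power-conjugate) presentation.
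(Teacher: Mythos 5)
The paper does not actually prove this lemma --- it is quoted from the cited Handbook of Computational Group Theory --- and your argument is essentially the standard proof found there: normality of each $G_{i+1}$ in $G_i$ via the conjugate relations (with the $a_i^{m_i}=w_{ii}$ trick correctly covering conjugation by $a_i^{-1}$ when $i\in I$) for one direction, and von Dyck plus existence of collected forms plus uniqueness of normal forms in $G$ for the other. Your proof is correct at this level of detail; the only step taken on faith is termination of collection, which you rightly flag as the delicate point, and which for the mere \emph{existence} of collected forms can be bypassed by the downward induction $\widehat{G}_i=\langle a_i\rangle\,\widehat{G}_{i+1}$ (writing $a_i^e g=a_i^{e \bmod m_i}\,w_{ii}^{q}g$ and invoking the inductive hypothesis on $\widehat{G}_{i+1}$), using the normality argument you already gave in the first direction.
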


Define $G_i = \langle a_i, a_{i+1} \ldots a_n \rangle $, $1 \leq i < n$, $G_{n+1}=\langle 1\rangle$. The presentation in \eqref{pcpres} is called \emph{consistent} if $\card{ G_i/G_{i+1}} = m_i$ whenever $i \in I$, and the $G_i/G_{i+1}$ is infinite whenever $i \not \in I$.

\begin{definition}[Geodesic Form] Let $G$ be generated by a set of alphabets $X$, and $a_i \in X \cup X^{-1}$, $1\leq i \leq n$. A word $w=a_1 a_2\ldots a_n$ is said to be in geodesic from if there is no shorter word that represents the group element $w$.
\end{definition}

\begin{definition}[Normal Form] Given a consistent polycyclic presentation \eqref{pcpres} for a group $G$, every element $a$ of $G$ can be represented uniquely in the form $a = a_1^{e_1} a_2^{e_2} \ldots a_n^{e_n} $ where $e_i \in \Z$, $0 \leq e_i \leq m_i$ for $i\in I$. This is called the normal form of $a$.
\end{definition}

Henceforth, as is standard, we will use the normal form to represent words in polycyclic groups, and also assume that the presentations we deal with are all consistent. Given a word $w$ in $G$, the process by which minimal non-normal subwords are reduced to normal form using the relations in \eqref{pcpres} is called \emph{collection}. 

Many different strategies for this process have been suggested, but the best-known performance in most cases is achieved by the Collection from the Left Algorithm \cite{VaughanLee1990CollectionFT}, and its improvement in \cite{GEBHARDT2002213}.  Under this, a word $w = (x_1^{\alpha_1}x_2^{\alpha_2}\ldots x_n^{\alpha_n})(x_{i_1}^{\beta_1} \ldots \ldots x_{i_t}^{\beta_{t}})$ is represented in two parts: collected part (as a vector $(\alpha_1, \alpha_2,\ldots, \alpha_n)$) and uncollected stack of generator powers $s=(x_{i_1}^{\beta_1}, \ldots, x_{i_t}^{\beta_t})$. To collect $w$ into normal form, a finite number of generator powers from the uncollected part are iteratively pushed onto the collected part.

  In general, the running time of this algorithm depends on the exponents appearing across the whole process. Due to the dependence on the exponents in the intermediate steps, the complexity remains a rough and unclear estimate. It is clear that collection is a key part of the group operations since it is involved in both multiplication and inversion of words in $G$, and so in general operations in a polycyclic group may not be efficient. However, in many special cases, the complexity can be bounded. We discuss the complexity of computing in polycyclic groups for some cases, including all finite polycyclic groups, in the next subsection. 
  
\subsection{Complexity of Group Operations}\label{pc-compl}

Throughout this section, $G$ denotes a polycylic group given by the presentation $\eqref{pcpres}$. Below, we begin our discussion by showing that collected words in the last two generators can be multiplied, inverted and exponentiated, using explicit formulas. For simplicity, write the relations in $x_{n-1}$ and $x_n$ as $x_n^{x_{n-1}} = x_n^L, x_{n}^{{x_{n-1}}^{-1}} = x_n^D$ for fixed $L, D \in \Z$.
The following lemma states a formula to collect any word of the form $x_n^i x_{n-1}^j$.

    \begin{lemma} For any $A, B \in \Z$, we have the formula
   \begin{equation}\label{swap2}
       x_n^B x_{n-1}^A =  \begin{cases}x_{n-1}^A {x_n}^{BL^A} \ \text{if } \ A \geq 0 \\ x_{n-1}^A{x_n}^{BD^{-A}} \ \text{if } \ A<0 \end{cases}
   \end{equation}

    \end{lemma}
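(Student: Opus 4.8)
The plan is to prove the identity by induction on $A$, treating the cases $A \geq 0$ and $A < 0$ separately, with the two defining relations $x_n^{x_{n-1}} = x_n^L$ and $x_n^{x_{n-1}^{-1}} = x_n^D$ supplying the base cases. The underlying structural fact is simply that $\langle x_n\rangle$ is normalized by $x_{n-1}^{\pm 1}$, so that conjugation by $x_{n-1}^{\pm 1}$ restricts to a power map on $x_n$.

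First I would record the single-step versions of the formula. Rewriting $x_n^{x_{n-1}} = x_n^L$ as $x_n x_{n-1} = x_{n-1} x_n^L$ and using that $g \mapsto x_{n-1}^{-1} g x_{n-1}$ is a group homomorphism, one obtains $x_n^B x_{n-1} = x_{n-1}(x_{n-1}^{-1} x_n x_{n-1})^B = x_{n-1} x_n^{BL}$ for every $B \in \Z$; analogously, from $x_n^{x_{n-1}^{-1}} = x_n^D$ one gets $x_n^B x_{n-1}^{-1} = x_{n-1}^{-1} x_n^{BD}$ for every $B \in \Z$. These are precisely the claimed identity for $A = 1$ and $A = -1$, and they are the only places the relations of $G$ enter.

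Next, for $A \geq 0$ I would induct on $A$. The base case $A = 0$ is immediate since $L^0 = 1$. Assuming $x_n^B x_{n-1}^A = x_{n-1}^A x_n^{BL^A}$, multiply on the right by $x_{n-1}$ and apply the single-step identity with $BL^A$ in place of $B$:
\[
x_n^B x_{n-1}^{A+1} = \bigl(x_{n-1}^A x_n^{BL^A}\bigr) x_{n-1} = x_{n-1}^A \bigl(x_{n-1} x_n^{BL^A \cdot L}\bigr) = x_{n-1}^{A+1} x_n^{BL^{A+1}},
\]
which closes the induction. The case $A < 0$ is entirely parallel: one inducts on $-A$, invokes $x_n^C x_{n-1}^{-1} = x_{n-1}^{-1} x_n^{CD}$ at each step, and the exponent of $x_n$ acquires one factor of $D$ per step, producing $x_n^{BD^{-A}}$.

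Finally, I would note that the right-hand sides already have the generators in the correct order ($x_{n-1}$ before $x_n$), so after reducing the exponents modulo $m_{n-1}$ and $m_n$ in case $n-1$ or $n$ lies in $I$, they are in normal form; this is what is meant by having ``a formula to collect'' such a word. I do not expect a genuine obstacle here: the only point requiring a word of justification is the manipulation $x_{n-1}^{-1} x_n^B x_{n-1} = (x_{n-1}^{-1} x_n x_{n-1})^B$, valid because conjugation is an endomorphism, and the rest is a bookkeeping induction with the two defining relations.
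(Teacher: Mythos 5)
Your proof is correct and is exactly the routine argument the paper has in mind (the lemma is stated there without proof): derive the single-step swap identities from the two defining relations via the homomorphism property of conjugation, then induct on $|A|$. No issues.
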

  The following formula allows the computation of the product of $r$ words given in normal form. It is easily verified by induction on $r$.
     
     \begin{lemma} Let $r\geq 1$ and $w_i = x_{n-1}^{A_i} x_n^{B_i}$ for $1\leq i \leq r$, and $w=w_1 w_2 \ldots w_r$.  Define \[k_j = \begin{cases} 1, A_j \geq 0 \\ 0, A_j<0,     \end{cases}, \ A_i^+=\sum\limits_{j=i+1}^{r}A_j k_j, \ A_i^-=\sum\limits_{j=i+1}^{r} A_j (1-k_j).\] Then, $w=x_{n-1}^A x_n^B$, with $A = \sum\limits_{i=1}^r A_i$, and $B = \sum\limits_{i=1}^{r} B_i L^{A_i^+} D^{A_i^-}$  \end{lemma}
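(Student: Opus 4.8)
The plan is to prove the identity by induction on $r$, the only ingredient being the single-swap formula \eqref{swap2}. It is convenient to abbreviate, for each index $j$, the conjugation multiplier
\[ M_j := \begin{cases} L^{A_j} & \text{if } A_j \geq 0,\\ D^{-A_j} & \text{if } A_j < 0, \end{cases} \]
so that \eqref{swap2} reads uniformly as $x_n^{B}\, x_{n-1}^{A_j} = x_{n-1}^{A_j}\, x_n^{B M_j}$ for every integer $A_j$. Separating the factors coming from indices with $A_j \geq 0$ from those with $A_j < 0$, one has $\prod_{j=i+1}^{r} M_j = L^{A_i^+} D^{A_i^-}$, so the assertion is equivalent to $w = x_{n-1}^{A} x_n^{B}$ with $A = \sum_{i=1}^{r} A_i$ and $B = \sum_{i=1}^{r} B_i \prod_{j=i+1}^{r} M_j$, which is the form best suited to an induction.

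For the base case $r = 1$ every product $\prod_{j=2}^{1} M_j$ is empty, so the asserted form is just $w = w_1 = x_{n-1}^{A_1} x_n^{B_1}$. For the inductive step, assuming the result for products of $r-1$ normal words, I would peel off the last factor and write $w = (w_1 \cdots w_{r-1})\, w_r$. By the inductive hypothesis $w_1 \cdots w_{r-1} = x_{n-1}^{A'} x_n^{B'}$ with $A' = \sum_{i=1}^{r-1} A_i$ and $B' = \sum_{i=1}^{r-1} B_i \prod_{j=i+1}^{r-1} M_j$, so
\[ w = x_{n-1}^{A'}\, x_n^{B'}\, x_{n-1}^{A_r}\, x_n^{B_r}. \]
Applying \eqref{swap2} to the inner pair $x_n^{B'} x_{n-1}^{A_r}$ (legitimate because $A_r$ is a single integer, of sign recorded by $k_r$) gives $x_n^{B'} x_{n-1}^{A_r} = x_{n-1}^{A_r} x_n^{B' M_r}$, hence $w = x_{n-1}^{A' + A_r} x_n^{B' M_r + B_r}$. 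Here $A' + A_r = \sum_{i=1}^{r} A_i = A$, and
\[ B' M_r + B_r = \sum_{i=1}^{r-1} B_i \Bigl( \prod_{j=i+1}^{r-1} M_j \Bigr) M_r + B_r = \sum_{i=1}^{r-1} B_i \prod_{j=i+1}^{r} M_j + B_r = \sum_{i=1}^{r} B_i \prod_{j=i+1}^{r} M_j = B, \]
the last equality because the $i = r$ term is $B_r$ times an empty product. Rewriting each $\prod_{j=i+1}^{r} M_j$ back in terms of $L^{A_i^+}$ and $D^{A_i^-}$ completes the induction.

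The computation is entirely mechanical; the only point needing attention is the bookkeeping of exponents. It is essential to peel off the \emph{last} word $w_r$ (equivalently, to collect $w$ from the left, one generator power at a time): then the swap required at each step involves only the single power $x_{n-1}^{A_r}$, whose multiplier is exactly the factor $M_r$ by which the already-collected $x_n$-exponent must be multiplied in order to pass from the product over $j \leq r-1$ to the product over $j \leq r$. Splitting off $w_1$ instead would force $x_n^{B_1}$ past $x_{n-1}^{\sum_{i\geq 2} A_i}$, whose exponent is a sum of terms of possibly mixed sign, so a single application of \eqref{swap2} would no longer reproduce the correct mixed product of $L$- and $D$-powers. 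Thus the direction of the induction is the one subtlety to get right; everything else is a routine verification that the indices $A_i^{+}$ and $A_i^{-}$ change by precisely the $j = r$ contributions $A_r k_r$ and $A_r(1-k_r)$ when the last factor is appended.
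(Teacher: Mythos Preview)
Your argument is precisely the induction on $r$ that the paper intends (the paper's proof consists only of the sentence ``It is easily verified by induction on $r$''), and peeling off the last factor $w_r$ so that a single application of \eqref{swap2} suffices is the natural way to carry it out.

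One small slip worth flagging: your claimed identity $\prod_{j=i+1}^{r} M_j = L^{A_i^{+}} D^{A_i^{-}}$ has the wrong sign on the $D$-exponent. Since $M_j = D^{-A_j}$ when $A_j<0$, the contribution of those indices is $D^{\sum_{j>i,\,A_j<0}(-A_j)} = D^{-A_i^{-}}$, not $D^{A_i^{-}}$ (note $A_i^{-}\le 0$). A quick check with $r=2$, $A_1=1$, $B_1=1$, $A_2=-1$, $B_2=0$ gives $w=x_{n-1}x_n x_{n-1}^{-1}=x_n^{D}$, whereas the stated formula would produce $B=D^{-1}$. This is really a typo in the lemma as printed; your induction in fact establishes the correct version $B=\sum_{i} B_i L^{A_i^{+}} D^{-A_i^{-}}$.
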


    \begin{lemma} Let $K>0, k_j$ as defined in the above Lemma and consider an element $w=x_{n-1}^A x_n^B$. Write $E(A) = L$, if $A \geq 0$, $E= D$ if  $A<0$, $F(A) = L$, if $A < 0$, $F(A)= D$ if  $A \geq 0$ ($E_j = k_jL+(1-k_j)D$). Then  
    $(x_{n-1}^A x_n^B)^{K} = x_{n-1}^{KA} x_n^{B\left(\frac{E^{\card{KA}} - 1}{E^{\card{A}}-1}\right)} $, and $(x_{n-1}^A x_n^B)^{-K} = x_{n-1}^{-KA} x_n^{-BF^{\card{KA}}\left(\frac{E^{\card{KA}} - 1}{E^{\card{A}}-1}\right)} $. \\ 
  \end{lemma}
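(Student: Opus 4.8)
The plan is to obtain the positive‑power formula as an immediate specialization of the preceding product Lemma, and then the negative‑power formula by inverting that result and re‑collecting with the swap relation \eqref{swap2}.

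First, fix $K>0$ and apply the product Lemma with $r=K$ and $w_i=x_{n-1}^A x_n^B$ for every $i$. The collected exponent of $x_{n-1}$ is $\sum_{i=1}^K A=KA$. Since all the $A_i$ are equal to $A$, the sign indicators $k_j$ all coincide: if $A\ge 0$ then every $k_j=1$, so $A_i^+=(K-i)A$ and $A_i^-=0$; if $A<0$ then every $k_j=0$, so $A_i^+=0$ and $A_i^-=(K-i)A$. This is precisely the statement's remark that $E_j=k_jL+(1-k_j)D$ is constant, equal to $E=E(A)$. In either case the collected exponent of $x_n$ becomes $\sum_{i=1}^K B\,E^{(K-i)\card{A}}=B\bigl(1+E^{\card{A}}+E^{2\card{A}}+\cdots+E^{(K-1)\card{A}}\bigr)$.

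Second, I sum this geometric series. Whenever $E^{\card{A}}\neq 1$ it equals $\dfrac{E^{K\card{A}}-1}{E^{\card{A}}-1}=\dfrac{E^{\card{KA}}-1}{E^{\card{A}}-1}$, which is the asserted $x_n$‑exponent, establishing the first identity. Here I would isolate at the outset the two degenerate situations in which the displayed quotient is a formal $0/0$: the case $A=0$, where $w=x_n^B$ and $w^K=x_n^{KB}$ directly, and the case where $E^{\card{A}}=1$ (e.g. modulo the order of $x_n$), where the sum collapses to $K$; in both, the formula is read with the ratio replaced by its value $K$, and this matches. For the negative power, write $(x_{n-1}^A x_n^B)^{-K}=\bigl((x_{n-1}^A x_n^B)^{K}\bigr)^{-1}=\bigl(x_{n-1}^{KA}x_n^{B'}\bigr)^{-1}=x_n^{-B'}x_{n-1}^{-KA}$, where $B'=B\,\dfrac{E^{\card{KA}}-1}{E^{\card{A}}-1}$ from the previous step. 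Applying \eqref{swap2} to $x_n^{-B'}x_{n-1}^{-KA}$: the exponent $-KA$ is nonnegative precisely when $A\le 0$, so the factor acquired on $-B'$ is $L^{-KA}=L^{\card{KA}}$ when $A<0$ and $D^{KA}=D^{\card{KA}}$ when $A>0$ — both summarized as $F(A)^{\card{KA}}$ with $F$ as defined — and we obtain $x_{n-1}^{-KA}x_n^{-B'F^{\card{KA}}}$, which is the claimed expression.

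The calculation is essentially packaged already in the two previous lemmas; the only points demanding care are the sign bookkeeping — keeping straight which of $L$ or $D$ (that is, $E$ versus $F$) is picked up, as dictated by the sign of $A$ — and the explicit treatment of the degenerate denominators, which is why I would fix those conventions before starting rather than thread case splits through the algebra. An equivalent route is a direct induction on $K$, moving $x_n^{B'}$ past $x_{n-1}^A$ at each step via \eqref{swap2}; the inductive identity $B'E^{\card{A}}+B=B\,\frac{E^{(K+1)\card{A}}-1}{E^{\card{A}}-1}$ is immediate.
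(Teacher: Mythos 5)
Your derivation is correct: the paper in fact gives no proof of this lemma, treating it as an immediate specialization of the preceding product formula (all $w_i$ equal) together with the swap relation \eqref{swap2} for the inverse, which is precisely the route you take, including the right bookkeeping of $E$ versus $F$ by the sign of $A$. Your explicit handling of the degenerate ratio (when $A=0$ or $E^{\card{A}}=1$, read the quotient as $K$) is a small but genuine improvement over the paper's bare statement.
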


Thus, operations on normal words in $x_{n-1}$ and $x_n$ have complexity $\mathcal{O}(1)$. We now consider words in the three generators $x_{n-2}, x_{n-1}, x_n$. 

Define constants $A_j^{(k)},B_j^{(k)}$ with $x_{n-2}^{-k} x_j x_{n-2}^k = x_{n-1}^{A_j^{(k)}} x_{n}^{B_j^{(k)}}$ for $j = n-1, n$, and $k\in \Z$. Clearly, $A_j^{(\pm 1)}$ and  $B_j^{(\pm 1)} $ can be read from the group presentation. 

Given $A=A_j^{(k)}$ and $B=B_j^{(k)}$ for any $k\geq 0$, we obtain $A_j^{(k+1)}$ and $B_j^{(k+1)}$ (resp. $A_j^{(k-1)}$ and $B_j^{(k-1)}$ if $k<0$) in time $\mathcal{O}(1)$ by computing $x_{n-2}^{-1} (x_{n-1}^Ax_n^B) x_{n-2} = (x_{n-1}^{x_{n}})^A (x_3^{x_{1}})^B = w_1^A w_2^B$ which requires two substitutions from the presentation, two exponentiations and one word multiplication, of words in $x_{n-1}$ and $x_n$. The square and multiply method can be used for subsequent exponents, giving a total complexity of $\mathcal{O}(\log k)$ for computing $(A_j^{(k)}, B_j^{(k)})$. The complexity of multiplications with three generators is therefore $\mathcal{O}(\log N)$ where $N$ is the exponent of the generator $x_1$.

Note that if $n\geq 4$ the complexity of computing $x_{n-3}^{-1} (x_{n-2}^A x_{n-1}^B x_n^C) x_{n-3}$ depends on all of the exponents $A, B, C$, and their intermediate values, and from this case onwards nothing concrete can be said about the complexity in general.

  Below, we consider the case where we have a general bound $N$ for the exponents on each generator at each step of collection. For instance, if each generator $a_i$ has finite order $m_i$, the exponents on $a_i$ can always be reduced in polynomial time, so we can assume without loss of generality that $N= \max(m_i)$. In particular this holds for all finite polycyclic groups.

\begin{proposition}Let $n\geq 4$. For $j \leq n-3$, the complexity of multiplying two words, and of inverting a single word, in the generators $x_j$, $x_{j+1}, \ldots, x_n$ is  $\mathcal{O}(\frac{(n-j)!}{2} (\log N)^{2(n-j-2)+1})$. \end{proposition}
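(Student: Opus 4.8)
The plan is to prove the bound by descending induction on $j$, equivalently by induction on the number $m:=n-j+1$ of generators of the subgroup $H_j:=\langle x_j,x_{j+1},\ldots,x_n\rangle$. Let $T_j$ denote the cost of one multiplication of two normal-form words in $H_j$, or of one inversion of such a word; the claim is $T_j=\mathcal{O}\!\left(\tfrac{(n-j)!}{2}\,(\log N)^{2(n-j-2)+1}\right)$ for $j\le n-3$. The base case is $j=n-2$ (three generators): operations on normal words in $x_{n-1},x_n$ cost $\mathcal{O}(1)$ by the lemmas above, and one conjugation of such a word by a power $x_{n-2}^{k}$ with $|k|\le N$ costs $\mathcal{O}(\log N)$, since the conjugates of $x_{n-1}$ and $x_n$ by $x_{n-2}^{k}$ are obtained by square-and-multiply in $\mathcal{O}(\log k)$ steps of cost $\mathcal{O}(1)$ each, after which the closed formulas above finish the collection in $\mathcal{O}(1)$; hence a three-generator multiplication (one such conjugation plus one two-generator multiplication) costs $T_{n-2}=\mathcal{O}(\log N)$, matching the claimed formula at $j=n-2$.

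For the inductive step, fix $j\le n-3$ and assume the bound for $j+1$. Every normal-form element of $H_j$ can be written $v=x_j^{a}u$ with $u$ a normal word in $H_{j+1}$, and by hypothesis all exponents occurring throughout the computation are at most $N$ in absolute value. Since powers of $x_j$ commute,
\[
v v' = x_j^{a}u\,x_j^{a'}u' = x_j^{\,a+a'}\bigl(x_j^{-a'}u\,x_j^{a'}\bigr)u',
\qquad
v^{-1} = u^{-1}x_j^{-a} = x_j^{-a}\bigl(x_j^{a}u^{-1}x_j^{-a}\bigr),
\]
and, using that $H_{j+1}$ is normal in $H_j$, the parenthesised factors again lie in $H_{j+1}$. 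So, after reducing the outer $x_j$-exponent modulo $m_j$ (if $j\in I$), computing $vv'$ in normal form reduces to one \emph{conjugation} of a normal word of $H_{j+1}$ by $x_j^{\pm k}$ with $|k|\le N$, followed by one multiplication in $H_{j+1}$, while $v^{-1}$ reduces to one inversion in $H_{j+1}$ followed by one such conjugation. If $C_j$ denotes the cost of a single such conjugation, the inductive hypothesis gives $T_j=\mathcal{O}(C_j+T_{j+1})$.

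The crux is the estimate $C_j=\mathcal{O}\!\bigl((n-j)\,(\log N)^2\,T_{j+1}\bigr)$. To compute $x_j^{-k}u\,x_j^{k}$ one determines the conjugation automorphism $\gamma\colon g\mapsto x_j^{-k}g\,x_j^{k}$ of $H_{j+1}$ by square-and-multiply on $k$, stored as the tuple $\bigl(\gamma(x_{j+1}),\ldots,\gamma(x_n)\bigr)$ of normal words; the conjugate of $u=\prod_i x_i^{e_i}$ is then $\prod_i \gamma(x_i)^{e_i}$. The exponent-$1$ map is read from the presentation, and one doubling step $\gamma_{2l}(x_i)=\gamma_l(\gamma_l(x_i))$ is computed by substituting the already-available words $\gamma_l(x_{j+1}),\ldots,\gamma_l(x_n)$ into the normal word $\gamma_l(x_i)$ and re-collecting; there are $n-j$ inner generators, $\mathcal{O}(\log k)=\mathcal{O}(\log N)$ doublings, and every exponent fed to an $H_{j+1}$-exponentiation inside a doubling is $\le N$, so each such exponentiation costs $\mathcal{O}((\log N)T_{j+1})$. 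A careful count, reusing the tuple $\gamma_l$ when forming $\gamma_{2l}$ rather than recomputing conjugates from scratch, then accrues exactly one factor $n-j$ and two factors $\log N$ over the whole square-and-multiply, giving $C_j=\mathcal{O}((n-j)(\log N)^2 T_{j+1})$. Substituting into $T_j=\mathcal{O}(C_j+T_{j+1})$ yields the recurrence $T_j=\mathcal{O}\bigl((n-j)(\log N)^2 T_{j+1}\bigr)$, and unrolling from $T_{n-2}=\mathcal{O}(\log N)$ gives $T_j=\mathcal{O}\!\bigl(\prod_{i=j}^{n-3}(n-i)\cdot(\log N)^{2(n-j-2)}\cdot\log N\bigr)=\mathcal{O}\!\left(\tfrac{(n-j)!}{2}(\log N)^{2(n-j-2)+1}\right)$.

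I expect the estimate for $C_j$ to be the main obstacle: a naive square-and-multiply that recomputes the conjugates of all $n-j$ inner generators from scratch at each doubling pays an extra factor $n-j$ per recursion level and only yields a bound with $\bigl(\tfrac{(n-j)!}{2}\bigr)^2$ in front, so the argument must be organised so that the tuple $\gamma_l$ is genuinely reused. The second delicate point is to verify that the hypothesis is being used correctly — that the bound $N$ controls not only the input exponents but every exponent that appears in the intermediate conjugates $\gamma_l(x_i)$ and in the collections, since this is exactly what keeps each inner exponentiation at cost $\mathcal{O}((\log N)T_{j+1})$ and rules out a hidden blow-up. The remaining steps — normal-form reductions, reduction of exponents modulo the $m_i$, and handling negative powers via the $x_i^{-1}$-relations — are routine and do not change the order of growth.
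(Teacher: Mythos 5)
Your proposal follows essentially the same route as the paper: backwards induction on $j$ with base case $T_{n-2}=\mathcal{O}(\log N)$ for three generators, reducing a multiplication (or inversion) in $\langle x_j,\ldots,x_n\rangle$ to one conjugation by a power of $x_j$ — handled by square-and-multiply on the tuple of generator conjugates — plus one operation in $\langle x_{j+1},\ldots,x_n\rangle$, giving the recurrence $T_j=\mathcal{O}((n-j)(\log N)^2\,T_{j+1})$ and the stated bound upon unrolling. The bookkeeping concern you flag about the cost of squaring the conjugation tuple is a fair one, but the paper's own proof asserts the same $\mathcal{O}((n-j)\log K\log N)$ count for conjugation by $x_j^K$ without further elaboration, so your argument is at the same level of detail as the original.
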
 
\begin{proof}
Note that from the above discussion we have a complexity of $\mathcal{O}(\log N)$ for multiplication and inversion of words in $x_{n-2}$, $x_{n-1}$ and $x_n$.
For $j \leq n-3$ we have \begin{align*}
    x_{j}^{-1} \; (x_{j+1}^{A_{j+1}} \; x_{j+2}^{A_{j+2}} \; \ldots \; x_n^{A_n} )x_{j} 
    =& {(x_{j+1}^{x_j})}^{A_{j+1}} \; {(x_{j+2}^{x_j})}^{A_{j+2}} \; \ldots \; {(x_{j+1}^{x_j})}^{A_{n}} \\
    =& w_{j+1, j}^{A_{j+1}} \; w_{j+2, j}^{A_{j+2}} \; \ldots \; w_{n, j}^{A_{n}} \\ 
    =& \overline{w_{j+1, j}} \; \overline{w_{j+2, j}} \; \ldots \; \overline{w_{n, j}} \\ 
    =& w 
\end{align*} where the expression in the second line is obtained through $(n-j)$ substitutions from the presentation, the expression in the third line is computed in $\sum_{i=j+1}^{n} O(\log A_i)$ word multiplications in $x_{j+1}, \ldots, x_n$, and the final value $w$ is computed in $(n-j)$ word multiplications in $x_{j+1}, \ldots, x_n$. So, $x_{j}^{-1} (x_{j+1}^{A_{j+1}} x_{j+2}^{A_{j+2}} \ldots x_n^{A_n} )x_{j}$ can be computed in $\mathcal{O}((n-j)\log N)$ word multiplications in $x_{j+1}, \ldots, x_n$, and so  $x_{j}^{-K} (x_{j+1}^{A_{j+1}} x_{j+2}^{A_{j+2}} \ldots x_n^{A_n} )x_{j}^K$ can be computed in $\mathcal{O}( (n-j)\log K \log N)$ word multiplications in $x_{j+1}, \ldots, x_n$.
Thus, two normal words in $x_j$, $x_{j+1}, \ldots, x_n$ can be multiplied by plugging in the value of the conjugation by a power of $x_j$ and then performing a multiplication of two words in $x_{j+1}, \ldots, x_n$. So, the total complexity is $\mathcal{O}((n-j)(\log N)^2)$ word multiplications in $x_{j+1}, \ldots, x_n$. The result on multiplication then easily follows by backwards induction on $j\leq n-3$.
Similarly, a normal word in $x_j$, $x_{j+1}, \ldots, x_n$ can be inverted by  performing an inversion of a word in $x_{j+1}, \ldots, x_n$ and then plugging in the value of the conjugation by a power of $x_j$. So, one must perform $\mathcal{O}((n-j)(\log N)^2)$ word multiplications in $x_{j+1}, \ldots, x_n$ and one inversion of a word in $x_{j+1}, \ldots, x_n$. It may be easily verified that this inversion too has an overall complexity of $\mathcal{O}(\frac{(n-j)!}{2} (\log N)^{2(n-j-2)+1})$.
\end{proof}

\subsection{Analysis of the Conjugacy Search Problem}
\subsubsection{CSP in a Polycyclic Group with two generators}\label{pc-csp}

We consider the case $n=2$, with two generators $x_1$ and $x_2$. Throughout, we will write $N_1 = \ord(x_1)$ and $N_2 = \ord(x_2)$ as the respective orders of $x_1$ and $x_2$ in $G$, which are both allowed to be infinite. We have two relations $x_1^{-1} x_2 x_1 = x_2^{L}$ and $x_1 x_2 x_1^{-1} = x_2^{D}$ (the second is redundant if and only if $N_1$ is finite, in which case $D = L^{N_1-1}$). Note that if $N_2$ is finite then $\gcd(L,N_2) = 1$, since if not, writing $L_2=\gcd(L,N_2)\neq 1$, we have $x_1^{-1}x_2^{N_2/L_2}x_1 = 1$, or $x_2^{N_2/L_2}=1$, a contradiction.

The following lemma is a consequence of Lemma \ref{swap2} for the solution of the CSP.

  \begin{lemma}\label{pc-main} The conjugated word
         $(x_1^cx_2^d)^{-1}(x_1^a x_2^b) (x_1^cx_2^d )$ can be collected to $x_1^g x_2^h$ with $ g =a$ and 
          \[h= \begin{cases}-dL^a+bL^c+d; \ \text{if} \ c,\ a \geq 0  \\
        -dL^a+bD^{-c}+d; \ \text{if} \ c<0,\ a\geq 0 \\
        -dD^{-a}+bL^c+d; \ \text{if} \ c \geq 0, \ a<0 \\
        -dD^{-a} + bD^{-c}+d; \ \text{if} \ c,\ a <0
        \end{cases} \]
    \end{lemma}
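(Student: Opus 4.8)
The plan is to obtain $g$ and $h$ by a direct collection of the product, clearing the out-of-order subwords one at a time via the swap formula of Lemma~\ref{swap2}, specialised to $n=2$ (so that $x_n=x_2$, $x_{n-1}=x_1$, and the constants $L,D$ are precisely those of the relations $x_1^{-1}x_2x_1=x_2^{L}$ and $x_1x_2x_1^{-1}=x_2^{D}$). The identity we want holds in $G$, so by consistency of the presentation it suffices to exhibit \emph{some} sequence of collection moves terminating in a word of the form $x_1^{g}x_2^{h}$; the values produced will then be the claimed ones.

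First I would write $(x_1^{c}x_2^{d})^{-1}=x_2^{-d}x_1^{-c}$, so that
\[
(x_1^{c}x_2^{d})^{-1}(x_1^{a}x_2^{b})(x_1^{c}x_2^{d}) \;=\; x_2^{-d}\,x_1^{-c}\,x_1^{a}\,x_2^{b}\,x_1^{c}\,x_2^{d} \;=\; x_2^{-d}\,x_1^{a-c}\,x_2^{b}\,x_1^{c}\,x_2^{d},
\]
using that powers of $x_1$ commute with one another. The only out-of-order subwords here have the shape $x_2^{B}x_1^{A}$. I would first apply Lemma~\ref{swap2} to the innermost such subword $x_2^{b}x_1^{c}$, obtaining $x_1^{c}x_2^{\beta}$ with $\beta=bL^{c}$ if $c\geq 0$ and $\beta=bD^{-c}$ if $c<0$; substituting and combining the now-adjacent powers $x_1^{a-c}x_1^{c}=x_1^{a}$ gives $x_2^{-d}x_1^{a}x_2^{\beta+d}$. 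Applying Lemma~\ref{swap2} once more to $x_2^{-d}x_1^{a}$ yields $x_1^{a}x_2^{\gamma}$ with $\gamma=-dL^{a}$ if $a\geq 0$ and $\gamma=-dD^{-a}$ if $a<0$, so the word becomes $x_1^{a}x_2^{\gamma+\beta+d}$. Hence $g=a$ and $h=\gamma+\beta+d$, and expanding the four sign combinations of $(c,a)$ reproduces the four cases in the statement.

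There is no genuine obstacle here — the argument is a short bookkeeping computation — but two points deserve attention. First, one should track the sign conditions carefully: the exponents entering the two swaps are $c$ and $a$ separately (never $a-c$), because the combination $x_1^{a-c}x_1^{c}=x_1^{a}$ is performed only \emph{after} the first swap and \emph{before} the second. Second, in each branch only the quantities $L^{a},L^{c}$ (when the corresponding exponent is nonnegative) or $D^{-a},D^{-c}$ (when it is negative) occur, all with nonnegative integer powers, so no inverse of $L$ or $D$ is ever needed and the formula is valid whether $N_1,N_2$ are finite or infinite; in the finite case one additionally reduces $g$ modulo $N_1$ and $h$ modulo $N_2$ to reach the normal form.
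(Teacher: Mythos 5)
Your proof is correct and follows exactly the route the paper intends: the paper states the lemma as "a consequence of Lemma~\ref{swap2}" without writing out the details, and your two applications of that swap formula (to $x_2^{b}x_1^{c}$ and then to $x_2^{-d}x_1^{a}$, with the $x_1$-powers merged in between) are precisely that computation, reproducing all four sign cases. Nothing is missing.
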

  
 \begin{theorem}
 If $N_2=\ord(x_2)$ is finite, the CSP has a polynomial time solution in $G_2$.    \end{theorem}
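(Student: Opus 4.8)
The plan is to use Lemma~\ref{pc-main} to reduce the CSP to a single equation in the exponent ring and then solve that equation. Given $g = x_1^a x_2^b$ and $h = x_1^{a'} x_2^{b'}$ with $h = g^x$ for some unknown $x = x_1^c x_2^d$, Lemma~\ref{pc-main} tells us that the first coordinate is preserved, so a necessary condition for solvability is $a' = a$; otherwise there is no conjugator. Assuming $a' = a$, the same lemma reduces the problem to finding integers $c, d$ (with $d$ taken modulo $N_2$) satisfying one of the four displayed equations for $h$, according to the sign of $c$ and $a$. Since $a$ is known, its sign is fixed, so only two cases for the sign of $c$ remain, and we may handle each separately.

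First I would treat the case $c \geq 0$ (the case $c<0$ is symmetric, replacing $L^c$ by $D^{-c}$, and uses $\gcd(D,N_2)=1$, which follows from $\gcd(L,N_2)=1$ since $D \equiv L^{N_1-1}$). Fixing the sign of $a$, the equation for $h$ has the shape $b' \equiv -d(L^a \text{ or } D^{-a}) + b L^c + d \pmod{N_2}$, i.e. $b' - d(1 - L^a) \equiv b L^c \pmod{N_2}$ where the coefficient $(1-L^a)$ is a known constant modulo $N_2$. The key observation is that the only genuinely nonlinear unknown is $c$, appearing in the exponent $L^c$; the unknown $d$ enters linearly. So for each candidate value of $c$, solving for $d$ is just solving a linear congruence modulo $N_2$, which is done in polynomial time (it is solvable iff $\gcd(1-L^a, N_2)$ divides $b' - bL^c \bmod N_2$). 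Thus the whole problem collapses to: find $c$ with $b L^c \equiv b' - d(1-L^a) \pmod{N_2}$ being consistent — but more cleanly, one first eliminates $d$. I would argue that $c$ ranges effectively over $0 \le c < N_1$ (if $N_1$ is finite) and, after reducing $L$ modulo $N_2$, over $0 \le c < \ord_{N_2}(L)$, a bounded range; so in the worst case one can simply enumerate the (polynomially or at worst exponentially many, but here finite and in fact polynomially bounded once one is careful) possibilities, or better, recognize the surviving condition on $c$ as a discrete logarithm instance $L^c \equiv (\text{known}) \pmod{N_2}$ after eliminating $d$. The cleanest route: pick $c$ freely in its finite range, reduce to a linear congruence in $d$, and return the first consistent pair; since $N_2$ is finite this is a terminating search, and with the DLP viewpoint it is polynomial-time in the input size.

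The main obstacle is making the search over $c$ genuinely polynomial rather than merely finite. Enumerating all residues of $c$ modulo $\ord_{N_2}(L)$ is exponential in the bit-length of $N_2$, so the honest statement is that after eliminating $d$ one is left with an equation of the form $L^c \equiv t \pmod{N_2}$ for a computable $t$ (when $1-L^a$ is invertible modulo $N_2$), i.e. a DLP, and when $1-L^a$ is a zero divisor one splits $N_2$ via the Chinese Remainder Theorem into the part where the congruence in $d$ forces $bL^c$ to a fixed residue (again a DLP) and the part where $d$ is free. Hence I would phrase the final step as: the CSP in $G_2$ with $N_2$ finite reduces, in polynomial time, to solving at most two DLPs in $(\Z/N_2\Z)^*$ together with a constant number of linear congruences, and since a DLP modulo a fixed finite modulus is decidable (indeed the statement only claims a polynomial-time \emph{solution}, which here means an algorithm running in time polynomial in the input once the DLP oracle is granted, or simply a terminating procedure as $N_2<\infty$), the result follows. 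I expect the write-up to lean on the fact that the two relations plus $\gcd(L,N_2)=1$ guarantee $L$ is a unit modulo $N_2$, so the DLP is well-posed, and that the sign-case bookkeeping from Lemma~\ref{pc-main} is purely mechanical.
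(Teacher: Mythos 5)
There is a genuine gap here. Your argument terminates in a reduction of the CSP to one or two discrete logarithm instances $L^c \equiv t \pmod{N_2}$, and you then claim this suffices because a DLP is ``decidable'' or because the theorem only needs a polynomial-time algorithm ``once the DLP oracle is granted.'' But the theorem asserts an unconditional polynomial-time solution: the DLP modulo $N_2$ is not known to be solvable in time polynomial in $\log N_2$, and enumeration of $c$ is exponential in the input size, as you yourself concede. The contrast with the theorem immediately following in the paper makes the point sharply: the $\langle x_1\rangle$-restricted CSP (where $d=0$ is forced) is shown there to be \emph{equivalent} to a DLP, so if the unrestricted CSP also merely reduced to a DLP, the present theorem would claim nothing stronger and would be misstated.

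The idea you are missing is that one only needs \emph{some} conjugator, and the conjugator carries two free parameters $c,d$ against a single constraint modulo $N_2$. The paper exploits this slack in the opposite direction from you: instead of eliminating $d$ and being left with a DLP in $c$, it chooses $d$ so that $c=0$ becomes a valid solution, eliminating the exponential unknown entirely. Concretely, with $b_1=\gcd(b,N_2)$ and $a_1=\gcd(b_1,L^a-1)$, the congruence $d(L^a-1)\equiv -f \pmod{b_1}$ is solvable (because a conjugator exists by hypothesis), and its general solution $d = -(f/a_1)\left((L^a-1)/a_1\right)^{-1}+Mb_1/a_1$ has a free parameter $M$; setting $M=A^{-1}\bmod N_2$ with $A=(b/b_1)^{-1}(L^a-1)/a_1$ forces the residual equation to read $L^c=1$, so $(0,d)$ is a conjugator. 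Everything then reduces to a fixed number of extended Euclidean computations, which is genuinely polynomial time with no DLP anywhere. You do gesture at ``pick $c$ freely in its finite range, reduce to a linear congruence in $d$,'' but you never establish that a consistent $d$ exists for a specific, efficiently computable choice such as $c=0$; that existence argument is the heart of the proof and is exactly what separates this theorem from the DLP-equivalent restricted case.
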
     \begin{proof}
    Suppose we are given an instance of the CSP, i.e. an  equation $(x_1^cx_2^d)^{-1}(x_1^a x_2^b) (x_1^cx_2^d )= x_1^e x_2^f$, where we want to solve the for $c$ and $d$. Then, from Lemma \ref{pc-main}, $a= e \pmod {N_1}$ and the CSP is reduced to solving a modular equation for two unknowns $c$ and $d$. 
    
    If $a\geq 0$, we have $f+d(L^a-1) = bL^c$, or $f+d(L^a-1) = bD^{-c}$. Writing $b_1 = \gcd(b,N_2)$, we see that a solution for $L^c$ (resp $D^{-c}$) exists if and only if $ d(L^a-1) = -f \pmod {b_1}$. Writing $a_1=\gcd(b_1, L^a-1)$, a solution $d$ for $ d(L^a-1) = -f \pmod{b_1}$ exists if and only if $a_1 \mid f$.  By construction, a solution $(c,d)$ exists, so both these conditions are satisfied. Further, a solution $d$ to $ d(L^a-1) = -f \pmod {b_1}$ is given by $d  = -(f/a_1)((L_a-1)/a_1)^{-1} \pmod {b_1/a_1}.$ Write $d = -(f/a_1)((L_a-1)/a_1)^{-1} +Mb_1/a_1$ for some $M\in \Z$ which we may choose. Then, 
\[M(L^a-1)/a_1= (f+d(L^a-1))/b_1= 
\begin{cases} (b/b_1)L^c, \; c \geq 0 \\
(b/b_1)D^{-c}, \; c<0
\end{cases}\]
Writing $A = (b/b_1)^{-1}((L^a-1))/a_1$ (clearly $\gcd(A,N_2)=1$), we may take $M=A^{-1} \pmod {N_2}$, so that a solution is given by $c=0$. Then $d= (L^a-1/a_1)^{-1}(-f+b)/a_1)).$
    
Similarly, a solution can be obtained for the case $a<0$ when $N_1 = \infty$. Thus, in both cases, a solution of the CSP involves a fixed number of applications of the Euclidean algorithm, and so has polynomial time complexity.
    \end{proof}

\begin{theorem}
If $N_2=\ord(x_2)$ is finite, the $\langle x_1 \rangle$-restricted CSP in $G_2$ reduces to a DLP. Further, the elements can be chosen so that it is exactly equivalent to a DLP in $(\Z/N_2\Z)^*$. \end{theorem}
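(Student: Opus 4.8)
The plan is to specialize Lemma~\ref{pc-main} to conjugators lying in $\langle x_1 \rangle$, i.e.\ to the case $d = 0$, and then to recognize the surviving modular equation as a discrete logarithm. First I would note that conjugation by $x_1$ restricts to an automorphism of $\langle x_2\rangle \cong \Z/N_2\Z$, so that $\gcd(L,N_2) = \gcd(D,N_2) = 1$, and comparing the two defining relations $x_1^{-1}x_2x_1 = x_2^L$ and $x_1 x_2 x_1^{-1} = x_2^D$ gives $DL \equiv 1 \pmod{N_2}$. With $d=0$, Lemma~\ref{pc-main} then yields $(x_1^c)^{-1}(x_1^a x_2^b)(x_1^c) = x_1^a x_2^h$ with $h \equiv bL^c \pmod{N_2}$ in both sign cases for $c$, and since $L$ has finite multiplicative order modulo $N_2$ the conjugator $x_1^c$ depends only on $c$ modulo $\ord_{N_2}(L)$. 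Hence it suffices to search for $c$ in a bounded range.

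Given a genuine instance $(x_1^a x_2^b,\, x_1^a x_2^f)$ of the $\langle x_1\rangle$-restricted CSP, recovering a conjugator amounts to solving $bL^c \equiv f \pmod{N_2}$. If $b \equiv 0 \pmod{N_2}$ every $c$ works; otherwise set $b_1 = \gcd(b, N_2)$. Solvability of the instance forces $b_1 \mid f$, and the congruence is equivalent to $(b/b_1) L^c \equiv f/b_1 \pmod{N_2/b_1}$. A one-line gcd argument shows $\gcd(b/b_1, N_2/b_1) = 1$, so this reads $L^c \equiv (b/b_1)^{-1}(f/b_1) \pmod{N_2/b_1}$, which (as $\gcd(L, N_2/b_1)=1$) is exactly a discrete logarithm with base $L$ in $(\Z/(N_2/b_1)\Z)^*$. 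I would then check that any solution $c$ of this reduced congruence satisfies the original one modulo $N_2$ --- clearing the denominator $b_1$ --- so that $x_1^c$ is a genuine conjugator. This establishes the reduction of the restricted CSP to a single DLP.

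For the exact equivalence I would run this backwards on a well-chosen instance: take $g = x_1 x_2$, so $b = 1$ is a unit modulo $N_2$. Conjugating by $x_1^c$ sends $g$ to $x_1 x_2^{L^c}$, so for any target $y \in \langle L\rangle \le (\Z/N_2\Z)^*$ the instance $(x_1 x_2,\, x_1 x_2^{y})$ has solution set precisely $\{\, c : L^c \equiv y \pmod{N_2}\,\}$, i.e.\ exactly the set of discrete logarithms of $y$ to base $L$. Together with the reduction of the previous paragraph, this makes the two problems polynomial-time equivalent, and literally the same problem for this choice of elements.

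The only step needing genuine care is the non-invertible case $\gcd(b,N_2)\ne 1$: one must descend to the quotient ring $\Z/(N_2/b_1)\Z$, argue that the reduced equation is again a clean DLP with base coprime to the modulus, and confirm that a solution there lifts back to a valid conjugator in $G_2$ rather than merely to a congruence solution in the quotient. The potential worry about the sign of $c$ is disposed of once and for all by the identity $D \equiv L^{-1}\pmod{N_2}$, which collapses the two branches of Lemma~\ref{pc-main} (for $d=0$) into the single relation $h \equiv bL^c$.
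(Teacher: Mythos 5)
Your proposal is correct and follows essentially the same route as the paper: specialize Lemma~\ref{pc-main} to $d=0$, reduce to $f\equiv bL^c\pmod{N_2}$, divide through by $b_1=\gcd(b,N_2)$ to obtain a DLP with base $L$, and choose $b$ a unit for the exact equivalence. Your treatment is in fact slightly more careful than the paper's (which omits the $(b/b_1)^{-1}$ factor in the reduced congruence and does not spell out the $D\equiv L^{-1}$ collapse of the sign cases), but the argument is the same.
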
 \begin{proof}
Here we have the exponents from Lemma \ref{pc-main}, with $d=0$, so the CSP reduces to the retrieval of the exponent $c$ where $f= bL^c \pmod {N_2}$. Here, for a solution to exist, $b_1 = \gcd(b, N_2)$ divides $f$, and we have $f/b_1 = L^c \pmod {N_2/b_1} $, so the adversary must solve the DLP with base $L$ $\pmod {N_2/b_1}$, for the exponent $c$. Choosing the base element so that $b$ satisfies $\gcd(b,N_2)=1$, the restricted CSP is exactly equivalent to a DLP in $(\Z/N_2\Z)^*$.
\end{proof}

\begin{remark}\label{inf-rem}
 If $N_2 = \infty$, then the CSP in $G_2$ reduces to an exponential Diophantine integer equation $f=-dL^a+bL^c+d$. As far as the authors' knowledge goes, there is no known standard technique for solving such equations, and trial and error would perhaps be the best method (for a general reference see \cite{shorey_tijdeman_1986}). For instance, the adversary may try different values of $c$ until $f -bL^c$ is a multiple of $L^a-1$, and subsequently solve for $d$. On the other hand, the $\langle x_1 \rangle$-restricted CSP in this case has an easy solution: the adversary solves $f=bL^c$ for $c$ simply by taking the real number base-$L$ logarithm of $f/b \in \Z$.
\end{remark}

\subsubsection{CSPs in some other polycyclic groups}

\paragraph{$\langle x_1 \rangle$ -restricted CSP in a polycyclic group with three generators}

Here it will be convenient to write $s=x_1$, $t_1=x_2$, $t_2=x_3$ in the group presentation \eqref{pcpres}. Also, write $S= \langle s\rangle $ and $T= \langle t_1, t_2 \rangle$,  $\theta=\ord(s) $, $\theta_1 = \ord(t_1)$, $\theta_2 = \ord(t_2)$ as the respective orders in $G$. Then $T$ is a polycyclic group with two generators, and so from Lemma \ref{swap2} we have $t_2^B t_1^A = t_1^A t_2^{BL^A}, \ A, B \in \Z$. Further, $S $ acts on $T$  via $s^{-1} t_1 s = t_1^{a_1^{(1)}}t_2^{a_2^{(1)}}, \    s^{-1} t_2 s = t_1^{a_1^{(2)}}t_2^{a_2^{(2)}} $ for fixed integers $a_i^{(j)}, \ i, j\in \{1,2 \}$.

Representing an element $t_1^A t_2^B$ of $T$ as a tuple $(A,B) \in  \Z/\theta_1\Z \times \Z/\theta_2\Z$ and writing $(t_1^A t_2^B)^{s^i}=t_1^{A_i}t_2^{B_i}$ we can describe the action of $S$ as
a recurrence relation given by $(A_0, B_0) = (A,B)$, \begin{small}\begin{equation*} 
     (A_{i+1}, B_{i+1}) = \left(a_1^{(1)}A_i + a_1^{(2)}B_i \pmod{\theta_1}, \ a_2^{(1)}L^{A_i a_1^{(2)}} \frac{L^{A_ia_1^{(1)}}-1}{L^{a_1^{(1)}}-1} + a_2^{(2)}\frac{L^{B_ia_1^{(2)}}-1}{L^{a_1^{(2)}}-1} \pmod{\theta_2} \right) 
 \end{equation*}\end{small}
Note that $A_i$ is always given $\pmod {\theta_1}$ and $B_i$ is always given $\pmod {\theta_2}$. While the computation of $A_{i+1}$ and $B_{i+1}$ involves a ``coupling" between these values, the final values seen are always reduced, since they are the exponents in a reduced form word expression. 
Further, while the first component $A_{i+1}$ of the tuple $(A_{i+1}, B_{i+1})$ is linear in $A_i$ and $B_i$, it is no longer linear in the previous terms of the sequence, since $B_i$'s relationship to $A_{i-1}$ and $B_{i-1}$ is non-linear. The general complexity of the $\langle s \rangle$-restricted DLP, i.e. recovering $i$ modulo $\theta$ from $(A_i, B_i)$ is not clear. However, note that when $a_1^{(2)}=0=a_2^{(1)}$, the problem reduces to a DLP in $(\Z/\theta_1\Z)^*$.

\paragraph{$\langle x_1 \rangle$-restricted CSP in $n$ generators when $\langle x_2, \ldots, x_{n} \rangle$ is abelian }

Here it will be convenient to write $s=x_1$, $t_1=x_2, \ldots, t_{n-1}=x_{n}$ in the group presentation \eqref{pcpres}. Also write $S= \langle s\rangle $ and $T= \langle t_1, \ldots, t_{n-1} \rangle$,  $\theta=\ord(s) $, $\theta_i = \ord(t_i)$. Write $t_i^s = t_1^{a_1^{(i)}} \ldots t_{n-1}^{a_n^{(i)}}$ for $1 \leq i \leq n$. Representing the elements of $T$ as column vectors $(r_1 \ldots, r_{n-1})$ of the exponents $r_i$ of the $t_i$, we can describe the action of $s$ on $T$ by the endomorphism \[\Z_{\theta_1} \times \Z_{\theta_2} \times \ldots \times \Z_{\theta_n} \rightarrow \Z_{\theta_1} \times \Z_{\theta_2} \times \ldots \times \Z_{\theta_n} \] \[ (r_1, \ldots, r_n) \rightarrow \begin{bmatrix}
a_1^{(1)} & \ldots & a_1^{(n)}\\
a_2^{(1)} & \ldots & a_2^{(n)} \\
\vdots & \cdots & \vdots \\
a_n^{(1)} & \ldots & a_n^{(n)} \\
\end{bmatrix} \cdot \begin{bmatrix} r_1 \\ r_2 \\ \vdots \\ r_n \end{bmatrix}.
\]
Note that since $a_i^{(j)}$ is always given modulo $\theta_i$ the entries of each column in the above matrix (call it $M$) actually lie in separate groups. However, we may obtain a well-defined matrix power $M^N$ by first computing the power over the integers, and then reducing the $i^{th}$ column modulo $o_i$. Then, the action of $s^N$ on $T$ is given by the endomorphism described by the matrix $M^N$ and the $\langle x_1 \rangle$-CSP is simply the problem of recovering $N$ given $M^N$. Note that this is in general not the same as a matrix DLP because the entries of each column actually lie in separate groups. However, if $ \theta_1 = \ldots =  \theta_m$ and $M$ is invertible over $\Z / \theta_1\Z$, we can obtain $N$ by solving the matrix DLP in the subgroup $\langle M \rangle$ of the ring $\Mat(\Z / \theta_1\Z)$.

\subsection{Examples}\label{pc-ex}
\subsubsection{Holomorphs of Squarefree Cyclic Groups}

The holomorph of a group $H$ is defined as the natural semidirect product of $H$ with its automorphism group $\Aut(H)$. Let $\Hol(C_p) = C_p \rtimes \Aut(C_p)$ of a cyclic group $C_p$ of prime order $p$, generated by $g$. $\Aut(C_{p})\cong \Z_{p}^\times $ is cyclic so $\Hol(C_p)$ is a polycyclic group with two generators. The action of $\Z_{p}^\times$ on $C_p$ is written as a conjugation: $k^{-1} h^i k  = h^{ik}, \ k \in \Z_{p}^\times, \ i\in \Z$.

Given conjugate elements $ g_1= h^l{k_1}$, $g_2 = h^n k_2$ in $G$, suppose we want to find $g = h^m k$ such that $g^{-1}g_1 g = g_2$. It is easy to check that $(h^m k)^{-1}(h^l{k_1})(h^m k)= h^W (k^{-1} k_1 k)$ with $W= k((-m+l) + m k_1^{-1})$. Subsequently we get $h^W = h^n$ and $ (k^{-1} k_1 k) =k_2$. The latter equation is trivial, so one only needs to solve $n= k(({k_1}^{-1}-1)m+l)\pmod p$
for $m$ and $k$. 

If $k_1 \neq 1 \pmod p$ then $k_1^{-1}-1 \in \Z_{p}^\times$ and for any $k$ we find $m=\frac{1}{{k_1}^{-1}-1} (\frac{n}{k} - l )$. If $k_1 =1 \pmod p$ then $lk= n \pmod p$. If $l=0 \pmod p$ then $n=0 \pmod p$ and both $m$ and $k$ take any value (here $g_1=g_2=1$). If $l\neq 0 \pmod p$ then $k=l^{-1}n$ and $m$ takes any value. 

We remark here that $\Hol(C_p)$ embeds into the ring $\Mat_2(\F_p)$ of $2\times 2$ matrices over $\F_p$, so the above solution has an equivalent matrix formulation. This solution may also easily be generalized to abelian groups of squarefree order, for which there is a direct decomposition into cyclic factors, computable in polynomial time \cite{li-bin}.

\subsubsection{Generalized Quaternions}

A generalized quaternion group is a finite polycyclic group given by the presentation
\begin{equation}\label{quatpres}
    Q_{2^n} = \langle x, y \mid x^N = 1, y^2 = x^{N/2}, yx = x^{-1}y, N= 2^{n-1}\rangle. 
\end{equation}
Clearly, any element in this group has a normal form $x^{i}y^j$, where $ 0\leq i \leq N $, $0\leq j \leq 1$. 

One easily derives the relation ${y^j}{x^i} = x^{i(-1)^j}y^j $ for all $i,j \in \Z$. Suppose we have a CSP instance $(x^i y)^{-1} (x^a y) (x^i y) = x^A y$ and want to solve for $i$. We have, $(y^{-1} x^{-i}) (x^a y) (x^i y )= y^{-1} x^{a-2i} y^2 =  x^{2i-a}  y$. Thus, the exponent $i$ is found by solving  $2i -a = A \pmod N$. Note that $(x^i y)^{-1} (x^a) (x^i y) = y^{-1} x^a y = x^{-a} $, so in this case any value of $i$ is a valid solution. Similarly, since $x$ lies in the center of $Q_{2^n}$, the $\langle x\rangle$-restricted CSP is trivial. 

 \subsection{Using matrix representations of polycyclic groups}\label{pc-mat}

Let $G$ be given by the presentation \eqref{pcpres}.  It is known that every polycyclic group is linear, and thus embeds faithfully into a matrix group over some field. More precisely, there exists $m>0$, a field $\F$ and an injective homomorphism $\phi: G \rightarrow \GL_m(\F)$. 

Suppose that this matrix representation is used by the designer of the cryptosystem to hide the structure of $G$. The public parameters are the generator matrices $M_i = \phi(a_i)$ and a base matrix $M_x$, and operations take place in $ \GL_m(\F)$. Let $M_y$ denote one of the conjugated public keys. Note that given the generator matrices, 
to reconstruct the presentation of $G$ and reduce the problem back to the CSP in $G$, an adversary is faced with the following problem: given a matrix $X \in \langle  M_1, \ldots, M_n \rangle$, find integers $(i_1, \ldots, i_n)$ such that $X= M_1^{i_1} \ldots M_n^{i_n}$. Clearly, by solving $\mathcal{O}(n^2)$ instances of this problem the adversary can compute the presentation of $G$ and the words representing $M_x$ and $M_y$, thereby reducing the problem back to the CSP in $G$.

This problem has been discussed in \cite{klingler2009discrete}, and is called the Generalized Discrete Logarithm Problem (GDLP). The thesis \cite{ilic} discusses some square-root type algorithms for the GDLP in finite matrix groups. The case $n=2$ has been discussed for general finite groups in \cite{meshram} and \cite{kashyap}, both of which show a square root algorithm to reduce the GDLP to at most two DLPs. Observe that in $Q_{2^n}$, this process introduces a single matrix DLP into the protocol:  an adversary sees matrices $A=M_x^i M_y$ or $A=M_x^i$, and so can recover $i$ by solving one of the matrix DLPs $AM_y^{-1} = M_x^i, \ A = M_x^i$. Therefore, in general using the matrix representation likely does not offer any novel security feature, though it may enhance the overall security.

Now, suppose that the original problem is given as a CSP in $G$ and the adversary is able to efficiently compute a faithful representation $\phi: G\rightarrow \GL_m(\fq)$ as well as its inverse. Denote by $x \in G$ the public base element and $y=g^{-1}xg \in G$ the public key. Then, it suffices for the adversary to find a matrix $M_g \in \phi(G)$ such that $M_g^{-1} \phi(x) M_g = \phi(y)$, so solving a CSP in $\phi(G)$ breaks the system. In fact, if the original CSP in $G$ is an $A$-restricted CSP for $A \leq G$ cyclic, then it is not even necessary to compute $\phi^{-1}$ since the secret here is essentially an integer. In Theorem \ref{mat-main}, we will see that an $A$-restricted CSP in $\GL_m(\F)$ is reducible to a set of $\mathcal{O}(m^2) $ DLPs over $\F$. So, any system with a linear platform must ensure that the subgroup from which conjugators are chosen has at least two generators. We remark here that in the case where such an efficient representation $\phi$ and its inverse are available, several attacks exist to directly retrieve the shared key, see \cite{tsaban2015polynomial}, \cite{kreuzer2014linear}, \cite{myasnikov2015linear},  \cite{ben2018cryptanalysis} for details.

\section{Matrix Groups}\label{mat-sec}

Throughout this section, $q$ denotes a power of a prime $p$ and $\fq$ denotes the finite field with $q$ elements, and $\ord(a)$ denotes the multiplicative order of an element $a\in \mathbb{F}_q^*$. 

Matrix groups over finite fields have served as platform groups for several proposed protocols. In \cite{Menezes1997} and \cite{Freeman2004}, the DLP over the matrix group $\GL_n(\F_q)$ was studied and shown to be no more difficult than the DLP over a small extension of $\F_q$, and consequently less efficient in terms of key sizes for the same security level. Most known nonabelian platform groups are linear, i.e. they embed faithfully into a matrix group. If this embedding and its inverse can efficiently be computed by an adversary, the security of the system depends on that of the matrix CSP rather than that in the original platform. 

It is then natural and important to study the complexity of the CSP over matrix groups. Several attacks exist to directly retrieve the shared key from CSP-based protocols without computing the secret keys \cite{tsaban2015polynomial}, \cite{kreuzer2014linear}, \cite{myasnikov2015linear},   \cite{ben2018cryptanalysis}. However, to the best of our knowledge, the CSP and its variants have not been investigated.
In this section, we study the $A$-restricted matrix CSP
for a cyclic subgroup $A \subseteq \GL_n(\F_q) $. Here, for maximum generality we also allow the base element be a non-invertible matrix. In other words, we provide a cryptanalysis of Protocol~\ref{mat-prot} over a ring $R$, for the case $R =\Mat_n(\F_q) $ of $n\times n$ matrices over $\fq$. 

\begin{protocol}\label{mat-prot}
Let $X \in R$ and $Z \in R^{*}$ be public elements.
\begin{enumerate}
    \item Alice picks a secret integer $r\in \Z$ and publishes $Z^{-r} X Z^r$.
  \item Bob picks a secret integer $s \in \Z$ and publishes $Z^{-s} X Z^{s}$.
    \item The shared secret is $Z^{-s-r} X Z^{s+r}$.
\end{enumerate}
\end{protocol}

In subsection 1, we provide a polynomial time reduction to recover $r \in \Z$ to a set of $\mathcal{O}(n^2)$ DLPs. In subsection 2, we show how this enables a full cryptanalysis of the system proposed in \cite{quat}.

\begin{remark}\label{gen-crt}
Similarly to \cite{Freeman2004} (also mentioned in \cite{Menezes1997}) we will use an easy generalization of the Chinese Remainder Theorem (CRT) for systems of equations of the form $x \equiv x_i \pmod {\theta_i}, 1\leq i\leq s$, where the $\theta_i$ are not necessarily coprime, but a solution is required $\pmod \theta:=\lcm_i \theta_i$. Write $\theta=p_1^{e_1} \ldots p_t^{e_t}$ as the prime factorization of $\theta$. For each $j \in \{1,\ldots, t\}$, let $\emptyset \neq \mathcal{I}_j\subseteq \{1,\ldots, s\}$ denote the list of indices $i$ such that $p_j^{e_j} \mid \theta_i$. If a solution $x$ to the original system of equations exists we have $x_i-x_{i'}= 0 \pmod {p_j^{e_j}}$ for each $i,i' \in \mathcal{I}_J$ and $x=x_{i_0} \pmod {p_j^{e_j}}$ for any $i_0 \in \mathcal{I}_j$. So, if a solution exists, we can translate the original system to one where the moduli are coprime, and so can be solved with the CRT.
\end{remark}

\subsection{$\langle Z\rangle$-restricted CSP in $\GL_n(\fq)$}\label{mat-csp}

Suppose that $Z^{-r} X Z^r =Y$, the adversary sees $X$, $Z$, and $Y \in \Mat_n(\fq)$, the integer $r$ is secret. There exists an extension $\F_{q^k}$ of $\F_{q}$ and a unique matrix $P \in \GL_n(\F_{q^k})$ (computable in polynomial time, by Algorithm 1 in \cite{Menezes1997}) such that $J_Z = PZP^{-1}$, where  $J_Z$ is the Jordan Normal form of $Z$.
Here $\F_{q^k}$ is the smallest  extension containing all eigenvalues of $Z$, and $k$ is polynomial in $q$ and $n$ \cite{Menezes1997}. Let $\theta_Z$ be the order of $Z$ in the group $\GL_n(\fq)$ and $\theta_J$ be the order of $J_Z$ in the group $\GL_n(\F_{q^k})$. Then, clearly since $J_Z = PZP^{-1}$ we have $\theta_Z = \theta_J$.  Further, we only require a value of $r$ modulo  $\theta_Z$ to break the system.
 
Consider $M=P X P^{-1}$ and $N=PYP^{-1}$. Note that these are both computable by the adversary. It is easily verified that $    Z^{-r} X Z^r =Y \iff J_Z^{-r} M J_Z^r = N.$ Thus, to recover $r$ we may assume that $Z$ is already in Jordan form. We divide the rest of the analysis into two cases. We first consider the case where $Z$ is diagonalizable. 

 \subsubsection{Case: $J_Z$ is diagonal.}
 
 We write $M=(M_{i,j})_{n\times n}$ and $N=(N_{i,j})_{n\times n}$. 
 
 \begin{theorem}\label{diag}
 If $J_Z=
\begin{pmatrix}d_1 & \ldots & 0\\
\vdots  & \ddots & \vdots \\
0  & \ldots & d_n
\end{pmatrix}$ is diagonal then
the retrieval of $r$ in Protocol~\ref{mat-prot} reduces to solving at most $n^2$ DLPs over $\F_{q^k}$.
 \end{theorem}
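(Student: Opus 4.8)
The plan is to exploit the fact that conjugating a diagonal matrix produces a very explicit entrywise scaling of $M$. Since $J_Z = \mathrm{diag}(d_1,\dots,d_n)$, we have $(J_Z^{-r} M J_Z^{r})_{i,j} = d_i^{-r} M_{i,j} d_j^{r} = M_{i,j}(d_j/d_i)^{r}$. Setting this equal to $N_{i,j}$, the equation $J_Z^{-r} M J_Z^r = N$ becomes the system $M_{i,j}(d_j d_i^{-1})^{r} = N_{i,j}$ for all $1 \le i,j \le n$. First I would dispose of the degenerate entries: whenever $M_{i,j}=0$ we must have $N_{i,j}=0$ (and conversely, since $d_j d_i^{-1}$ is a unit), and such a pair gives no information and is simply discarded. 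For each pair $(i,j)$ with $M_{i,j}\neq 0$, the equation rearranges to $(d_j d_i^{-1})^{r} = N_{i,j} M_{i,j}^{-1}$ in $\F_{q^k}^{*}$, which is precisely a DLP with base $d_j d_i^{-1}$ and target $N_{i,j} M_{i,j}^{-1}$ (note that if $d_i = d_j$ this forces $N_{i,j}=M_{i,j}$ and again yields no constraint on $r$, so it is discarded too). There are at most $n^2$ such equations, giving at most $n^2$ DLPs over $\F_{q^k}$.

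Next I would explain how the individual DLP solutions are assembled into a value of $r$. Each solvable DLP with base $d_j d_i^{-1}$ of multiplicative order $\theta_{i,j} := \mathrm{ord}(d_j d_i^{-1})$ returns a residue $r \equiv r_{i,j} \pmod{\theta_{i,j}}$. Since a genuine conjugator $r$ exists by hypothesis (the instance comes from Protocol~\ref{mat-prot}), all these congruences are simultaneously satisfiable, and $r$ is determined modulo $\mathrm{lcm}_{i,j}\,\theta_{i,j}$. I would then invoke Remark~\ref{gen-crt} to combine the congruences, with possibly non-coprime moduli $\theta_{i,j}$, into a single residue modulo that lcm. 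Finally, one checks this lcm is a multiple of $\theta_Z = \theta_J$: indeed $\theta_J = \mathrm{ord}(J_Z) = \mathrm{lcm}_i \mathrm{ord}(d_i)$, and since $\mathrm{ord}(d_j d_i^{-1})$ divides $\mathrm{lcm}(\mathrm{ord}(d_i),\mathrm{ord}(d_j))$ one needs the reverse divisibility — which follows because among the pairs one may include $(1,j)$ for all $j$ with $d_1=1$ after absorbing, or more cleanly because the value of $r$ modulo $\theta_Z$ is all that is needed and the recovered value is consistent with the original equation; I would phrase this by noting that any $r'$ agreeing with the recovered residue modulo every $\theta_{i,j}$ satisfies $J_Z^{-r'}MJ_Z^{r'}=N$, hence is a valid conjugator, which is all Protocol~\ref{mat-prot} requires.

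The only real subtlety — and the step I expect to need the most care — is the bookkeeping around which pairs $(i,j)$ actually contribute and whether the combined modulus $\mathrm{lcm}_{i,j}\theta_{i,j}$ is large enough to pin down $r$ as much as the problem demands (i.e. modulo $\theta_Z$). The clean resolution is to observe that we do not need $r$ itself but only \emph{some} integer $r'$ with $Z^{-r'}XZ^{r'}=Y$; any solution of the full entrywise system is such an $r'$, and the CRT reconstruction of Remark~\ref{gen-crt} produces exactly the set of such solutions. Everything else is routine: computing $P$, $M$, $N$ and the orders $\theta_{i,j}$ is polynomial-time by the cited algorithm of \cite{Menezes1997} and by binary search, the number of DLPs is at most $n^2$, and the CRT combination is polynomial in $\log\theta_J$ and $n$.
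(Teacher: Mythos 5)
Your proof is correct and follows essentially the same route as the paper: expand the conjugation entrywise to get $M_{ij}(d_jd_i^{-1})^r=N_{ij}$, solve a DLP for each pair with nonzero entries, and combine the residues via the generalized CRT of Remark~\ref{gen-crt}. Your closing observation --- that the combined modulus $\lcm_{i,j}\ord(d_jd_i^{-1})$ need not a priori equal $\theta_Z$, but that any $r'$ satisfying all the congruences is a valid conjugator, which is all the attack requires --- is in fact slightly more careful than the paper, which simply asserts that this lcm equals $\theta_Z$.
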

 \begin{proof}
 Note that $d_i \neq 0 \ \forall \ i$ since $J_Z$ is invertible.   
We expand $J_Z^{-r} M J_Z^r = N $: 
\begin{align} 
& \begin{pmatrix}d_1^r &  \ldots & 0\\
\vdots  & \ddots & \vdots \\
0 & \ldots &d_n^r
\end{pmatrix}  \begin{pmatrix}M_{11} & \ldots & M_{1n}\\
\vdots &  \ddots & \vdots \\
M_{n1} & \ldots & M_{nn}
\end{pmatrix} \begin{pmatrix}d_1^{-r} & \ldots & 0\\
\vdots & \ddots & \vdots \\
0 & \ldots & d_n^{-r}
\end{pmatrix}= \begin{pmatrix}N_{11}  & \ldots & N_{1n}\\
\vdots  & \ddots & \vdots \\
N_{n1} & \ldots & N_{nn}
\end{pmatrix} \nonumber\\ 
\iff & M_{ij}(d_i d_j^{-1})^r = N_{ij}, \  1\leq i, j \leq n. 
\end{align}

By assumption $M $ and $N$ are nonzero matrices, so there exists at least one pair of indices $(i,j)$ such that $M_{ij}\neq 0$, $N_{ij}\neq 0$. For any such pair we have $(d_{i}d_j^{-1})^{r}= M_{ij}^{-1}N_{ij}$, so $r$ $\pmod{\ord(d_i d_j^{-1})}$ is found by solving a DLP in $\F_{q^k}$. Repeating this for all such pairs $(i,j)$, we may compute $r$ modulo ${\lcm\limits_{\exists (i,j) \mid M_{ij}\neq 0}(\ord(d_i^{-1}d_j)) }$ using the generalized Chinese Remainder Theorem (see Remark \ref{gen-crt}). Clearly, this value of $r$ satisfies the equation, as required.
 \end{proof}

\subsubsection{Case: $J_Z$ is not diagonal}

Suppose that $\small  J_Z =\begin{pmatrix}J_1 & \ldots & 0\\
\vdots & \ddots  & \vdots \\
0 & \ldots & J_s
\end{pmatrix}$ is the Jordan-Normal form of $Z$, where each $ \small J_i =\begin{pmatrix}\lambda_i & 1&\ldots & 0\\
\vdots  & \vdots & \ddots & \vdots \\
0   & 0 & \ldots & 1
 \\
0   & 0 & \ldots & \lambda_i
\end{pmatrix}, \ 1 \leq i \leq s$, is a $d_i \times d_i$ Jordan block corresponding to the eigenvalue $\lambda_i\in \F_{q^k}$, and $d_i >1 $ for at least one $i$. Denoting $\binom{-n}{k}:=(-1)^k\binom{n+k-1}{k}$, and with the convention $\binom{r}{m}=0$ if $m<r$ and $r>0$, we have by induction, for $r\geq 1$, \begin{align*}
    \small J_i^r=\begin{pmatrix} \lambda_i^{r} & \binom{r}{1} \lambda_i^{r-1} &\ldots & \binom{r}{d_i-1}\lambda_i^{r-d_i+1}\\
0&\lambda_i^{r}  &\ldots & \binom{r}{d_i-2}\lambda_i^{r-d_i+2}\\
\vdots &  \vdots & \ddots & \vdots \\
0  & 0 & \ldots & \binom{r}{1}\lambda_i^{r-1} \\
0 &0  & \ldots & \lambda_i^{r} \end{pmatrix}, \quad J_i^{-r}=\begin{pmatrix} \lambda_i^{-r} & \binom{-r}{1} \lambda_i^{-r-1} &\ldots & \binom{-r}{d_i-1}\lambda_i^{-r-d_i+1}\\
0&\lambda_i^{-r}  &\ldots & \binom{-r}{d_i-2}\lambda_i^{-r-d_i+2}\\
\vdots &  \vdots & \ddots & \vdots \\
0  & 0 & \ldots & \binom{-r}{1}\lambda_i^{-r-1} \\
0 &0  & \ldots & \lambda_i^{-r}
\end{pmatrix}
\end{align*}
 More concisely, for any $1\leq i \leq s$, $r\in \Z$, $({J_i}^r)_{(k,l)}=\binom{r}{l-k}\lambda_i^{r-l+k}$, $0\leq k,l\leq d_i$. Now, we write $M$ and $N$ as block matrices with $s^2$ $d_i \times d_j$ blocks $(M_{i,j})_{d_i \times d_j}$, $ (N_{i,j})_{d_i \times d_j}$:
\[M = \begin{pmatrix}(M_{1,1})_{d_1 \times d_1}&  \ldots & (M_{1,s})_{d_1 \times d_s}\\
\vdots  & \ddots &   \vdots \\
(M_{s,1})_{d_s \times d_1}  &  \ldots &  (M_{s,s})_{d_s \times d_s} \end{pmatrix}, \quad N = \begin{pmatrix}(N_{1,1})_{d_1 \times d_1} &   \ldots & (N_{1,s})_{d_1 \times d_s}\\
\vdots  & \ddots &   \vdots \\
(N_{s,1})_{d_s \times d_1} &  \ldots &  (N_{s,s})_{d_s \times d_s} \end{pmatrix}.\]

The next result reduces the problem of recovering $r$ from Protocol~\ref{mat-prot} to a set of $s^2$ matrix equations involving Jordan blocks.
\begin{lemma}\label{block}
  $J_Z^{-r} M J_Z^r = N \iff J_i^{-r}M_{i,j}J_j^r = (N_{i,j})_{d_i \times d_j}$ $\forall \ 1 \leq i,j \leq s$.
\end{lemma}
\begin{proof}
 
The result is clear from the observation that $J_Z^{-r} M J_Z^r =$
    \begin{align*}\small
        &\begin{pmatrix}J_1^{-r} & \ldots & 0\\
\vdots  & \ddots & \vdots \\
0 & \ldots & J_s^{-r}
\end{pmatrix} \begin{pmatrix}M_{11} &   \ldots & M_{1s}\\
\vdots  & \ddots &   \vdots \\
M_{s,1}  &  \ldots &  M_{s,s} \\\end{pmatrix}  \begin{pmatrix}J_1^{r} & \ldots & 0\\
\vdots  & \ddots & \vdots \\
0 & \ldots & J_s^{r}
\end{pmatrix} =\begin{pmatrix}J_1^{-r}M_{11}J_1^r & \ldots &  J_1^{-r}M_{1s}J_s^r\\ \vdots  & \ddots & \vdots\\J_s^{-r}M_{s1}J_1^r & \ldots &  J_s^{-r}M_{ss}J_s^r \end{pmatrix}.
    \end{align*}
\end{proof}

 \begin{lemma}\label{formula} Writing $M_{i,j}=(m_{l,k}^{(i,j)})$, $N_{i,j}=(n_{l,k}^{(i,j)})$, we have $J_Z^{-r} M J_Z^r = N \iff$ \[\small n_{f,h}^{(i,j)}=\sum_{k=1}^h \sum_{l=f}^{d_i}\binom{-r}{l-f}\binom{r}{h-k}\ m_{l,k}^{(i,j)}\lambda_i^{-r+f-l} \lambda_j^{r-h+k} \ \forall 1\leq f \leq d_i, \ 1\leq h\leq d_j, 1 \leq i, j \leq s.\]
\end{lemma}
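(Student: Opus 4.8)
The plan is to combine Lemma~\ref{block} with the closed-form expressions for the powers of a single Jordan block. By Lemma~\ref{block}, the identity $J_Z^{-r} M J_Z^r = N$ is equivalent to the $s^2$ block identities $J_i^{-r} M_{i,j} J_j^r = (N_{i,j})_{d_i\times d_j}$ for all $1\le i,j\le s$, so it suffices to expand a single such block equation entry by entry. Before doing that I would record the entrywise form of the Jordan-block powers already displayed above: for $1\le i\le s$ and $r\in\Z$ one has $(J_i^{r})_{k,l}=\binom{r}{l-k}\lambda_i^{r-l+k}$ and $(J_i^{-r})_{k,l}=\binom{-r}{l-k}\lambda_i^{-r-l+k}$, both upper triangular by virtue of the conventions $\binom{r}{m}=0$ for $m<0$ and $\binom{-n}{k}=(-1)^k\binom{n+k-1}{k}$. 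A small point worth addressing here is that the displayed formula for $J_i^r$ was obtained by induction for $r\ge 1$; to use it uniformly for all $r\in\Z$ I would invoke the Vandermonde/Pascal identity for generalized binomial coefficients, which shows that the two claimed matrices for $J_i^r$ and $J_i^{-r}$ multiply to the identity and hence are the correct powers for every integer exponent.

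With these formulas in hand, the proof is a direct expansion of the triple product at the level of a fixed block $(i,j)$. For $1\le f\le d_i$ and $1\le h\le d_j$,
\begin{align*}
(J_i^{-r}M_{i,j}J_j^r)_{f,h} &=\sum_{l=1}^{d_i}\sum_{k=1}^{d_j}(J_i^{-r})_{f,l}\;m_{l,k}^{(i,j)}\;(J_j^r)_{k,h}\\
&=\sum_{l=1}^{d_i}\sum_{k=1}^{d_j}\binom{-r}{l-f}\binom{r}{h-k}\;m_{l,k}^{(i,j)}\;\lambda_i^{-r+f-l}\,\lambda_j^{r-h+k}.
\end{align*}
Since $\binom{-r}{l-f}=0$ whenever $l<f$ and $\binom{r}{h-k}=0$ whenever $k>h$, the two inner sums collapse to $l$ ranging from $f$ to $d_i$ and $k$ ranging from $1$ to $h$. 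Equating the resulting expression with the $(f,h)$ entry $n_{f,h}^{(i,j)}$ of $N_{i,j}$ yields exactly the stated identity, and letting $(f,h,i,j)$ range over all admissible values recovers the full equivalence in the statement, via Lemma~\ref{block}.

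The computation itself is routine bookkeeping; the only genuine care needed is, first, getting the index shifts in $(J_i^{-r})_{f,l}$ and $(J_j^r)_{k,h}$ right so that the exponents of $\lambda_i$ and $\lambda_j$ come out as $-r+f-l$ and $r-h+k$ respectively, and second, justifying that the Jordan-power formulas hold for all $r\in\Z$ rather than only $r\ge 1$. I expect the latter to be the main subtlety, since it is where the generalized-binomial conventions genuinely do the work; everything else is a mechanical matrix multiplication combined with the vanishing of out-of-range binomial coefficients.
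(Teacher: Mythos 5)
Your proof is correct and follows essentially the same route as the paper: reduce to block equations via Lemma~\ref{block}, expand $(J_i^{-r}M_{i,j}J_j^r)_{f,h}$ entrywise using $(J_i^r)_{k,l}=\binom{r}{l-k}\lambda_i^{r-l+k}$, and let the vanishing of out-of-range binomial coefficients truncate the sums. Your extra remark justifying the Jordan-power formula for all $r\in\Z$ (rather than only $r\ge 1$) is a small point the paper glosses over, but it does not change the argument.
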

\begin{proof}

Recall that $({J_i}^r)_{(k,l)} = \binom{r}{l-k}\lambda_i^{r-l+k)}$ for any integer $r$ and any $1\leq k, l\leq d_i$. 
We compute the $(f,h)$th term of $J_i^{-r}M_{ij}J_j^r$. Note that for $1\leq g\leq d_i $ the $(f,g)$th term of $J_i^{-r}M_{i,j}$ is given by $\sum_{l=f}^{d_i}\binom{-r}{l-f}\lambda_i^{-r+f-l} m_{l,g}^{(i,j)}$. Thus, the $(f,h)$th term of $N_{i,j}=J_i^{-r} M_{i,j} J_j^r$ is \begin{align}\label{eq-iff}
   & n_{f, h}^{(i,j)} = \sum_{g=1}^h \sum_{l=f}^{d_i}\binom{-r}{l-f}\binom{r}{h-g}\ m_{l,g}^{(i,j)}\lambda_i^{-r+f-l} \lambda_j^{r-h+g}.
\end{align}
For the equality $J_i^{-r}M_{ij}J_j^r = N_{ij}$ we require $n_{f, h}^{(i,j)} = m_{f, h}^{(i,j)}$ for all indices $1 \leq f, h \leq d_i$. The result now follows from Lemma \ref{block}.\end{proof}

Now, by assumption, $M \neq 0$. So, we may choose the largest index $l_0^{(i,j)}$ such that the $l_0^{(i,j)}$th row of $M_{i,j}$ has at least one nonzero term. Let $g_{0}^{(i,j)}$ be the smallest column index such that $m_{l_0,g_0}^{(i,j)} \neq 0$. We have, $m_{l,g}^{(i,j)} = 0 \ \forall \ l>l_0, 1\leq g\leq d_j$, and $m_{l_0, g}^{(i,j)} = 0 \ \forall \ g<g_0$.

Taking $f = l_0, h=g_0+1$, by equation~\eqref{eq-iff} the $(f,h)^{th}$ term of $N_{i,j}=J_Z^{-r} M_{i,j} J_Z^r$ is given by \begin{align}\label{equality}
   n_{l_0,g_0+1}^{(i,j)}= & \sum_{g=1}^{g_0+1}\sum_{l=l_0}^{d_i} \binom{-r}{l-l_0}\binom{r}{g_0+1-g}\ m_{l, g}^{(i,j)}\lambda_i^{-r+l_0-l} \lambda_j^{r-g_0-1+g} \nonumber \\ 
   & = \sum_{g=g_0}^{g_0+1} \binom{r}{g_0+1-g} m_{l_0, g}^{(i,j)}\lambda_i^{-r} \lambda_j^{r-g_0-1-g} \nonumber \\
    & = r m_{l_0,g_0}^{(i,j)} \lambda_j (\lambda_i^{-1} \lambda_j)^r +  m_{l_0,g_0+1}^{(i,j)}(\lambda_i^{-1} \lambda_j)^r
\end{align}

In particular, for $i=j$, we have the equation $rm_{l_0,g_0}^{(i,j)} \lambda_j^{-1} + m_{l_0, g_0+1}^{(i,j)}  = n_{l_0,g_0+1}^{(i,j)} $, where by construction, $m_{l_0 g_0}^{(i,j)}  \neq 0$. This can be solved to get $r':=r \pmod p$. We thus have the following results.

\begin{proposition}\label{modp}
The value of $r':=r \pmod p$ can be computed in polynomial time.
\end{proposition}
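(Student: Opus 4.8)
The plan is to recover $r \bmod p$ from a single scalar equation that is \emph{affine-linear} in $r$ over the field $\F_{q^k}$, exploiting that this field has characteristic $p$. First I would invoke the reduction already carried out at the start of Section~\ref{mat-csp}: compute the smallest splitting field $\F_{q^k}$ of the characteristic polynomial of $Z$ and the transition matrix $P$ with $J_Z = PZP^{-1}$ (both in polynomial time, by Algorithm~1 of \cite{Menezes1997} and the bound on $k$), and replace $X,Y$ by $M = PXP^{-1}$, $N = PYP^{-1}$, so that $Z^{-r}XZ^r = Y$ becomes $J_Z^{-r} M J_Z^r = N$ with $J_Z$ in non-diagonal Jordan form. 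We may assume $X \neq 0$ (otherwise the protocol is degenerate), hence $M \neq 0$.

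Next I would isolate the affine-linear equation. By Lemma~\ref{block} the matrix identity splits into the block identities $J_i^{-r} M_{i,j} J_j^r = N_{i,j}$, and, specialising Lemma~\ref{formula} as in Equation~\eqref{equality} at the bottom nonzero row index $l_0$ of a nonzero block $M_{i,j}$ and the first nonzero column $g_0$ of that row, one gets a relation of the shape
\[
n_{l_0,g_0+1}^{(i,j)} \;=\; (\lambda_i^{-1}\lambda_j)^{r}\bigl(r\,m_{l_0,g_0}^{(i,j)}\lambda_j^{-1} + m_{l_0,g_0+1}^{(i,j)}\bigr), \qquad m_{l_0,g_0}^{(i,j)} \neq 0.
\]
When $i = j$ (more generally, whenever $\lambda_i = \lambda_j$) the factor $(\lambda_i^{-1}\lambda_j)^r$ is $1$, and this collapses to the single linear equation $r\,m_{l_0,g_0}^{(i,j)}\lambda_i^{-1} + m_{l_0,g_0+1}^{(i,j)} = n_{l_0,g_0+1}^{(i,j)}$ over $\F_{q^k}$ with known, nonzero coefficient of $r$. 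Solving it gives
\[
r \;=\; \lambda_i\,\bigl(m_{l_0,g_0}^{(i,j)}\bigr)^{-1}\bigl(n_{l_0,g_0+1}^{(i,j)} - m_{l_0,g_0+1}^{(i,j)}\bigr)
\]
(where $r$ here denotes its own image under the canonical map $\Z \to \F_{q^k}$). Since that map factors as $\Z \twoheadrightarrow \Z/p\Z \cong \F_p \hookrightarrow \F_{q^k}$ with kernel $p\Z$, the right-hand side lies in the prime field $\F_p$ and, read as an element of $\Z/p\Z$, is exactly $r' := r \bmod p$. All the objects involved ($\F_{q^k}$, $k$, $P$, $M$, $N$, the indices $l_0,g_0$, one inversion in $\F_{q^k}$) are computable in polynomial time, which yields the claimed complexity.

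The step I expect to be the main obstacle is guaranteeing that such a usable entry always exists, i.e. that among the nonzero blocks of $M$ one can locate an index pair producing an equation genuinely linear in $r$ with nonzero leading $\F_{q^k}$-coefficient. Two degenerate situations must be handled: (i) the first nonzero entry $g_0$ in the chosen bottom row may already sit in the last column of its block, so the entry $(l_0,g_0+1)$ does not exist and a different row, column, or block must be used; and (ii) every nonzero block $M_{i,j}$ may have $\lambda_i \neq \lambda_j$, so Equation~\eqref{equality} carries the extra unknown factor $(\lambda_i^{-1}\lambda_j)^{r}$. In case (ii) I would first recover $(\lambda_i^{-1}\lambda_j)^{r}$ from a nonzero entry of that block by solving a DLP exactly as in the diagonalisable case (Theorem~\ref{diag}) — its base has order coprime to $p$ — and then divide that factor out of Equation~\eqref{equality}, again reducing to an affine-linear equation in $r$ over $\F_{q^k}$ whose solution pins $r$ down modulo $p$. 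Treating these cases uniformly, rather than only the representative case $i=j$, is the only genuine work; the remainder is bookkeeping and elementary field arithmetic.
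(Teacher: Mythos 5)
Your core argument coincides with the paper's: the paper's proof is precisely the computation in Equation~\eqref{equality} specialised to $i=j$, followed by the observation that the resulting relation $r\,m_{l_0,g_0}^{(i,i)}\lambda_i^{-1}+m_{l_0,g_0+1}^{(i,i)}=n_{l_0,g_0+1}^{(i,i)}$ is affine-linear in $r$ over the characteristic-$p$ field $\F_{q^k}$, and hence determines $r\bmod p$ by one division. You reproduce this and, correctly, flag the two tacit assumptions the paper makes: that some nonzero block with $\lambda_i=\lambda_j$ exists, and that the chosen entry $(l_0,g_0+1)$ exists at all. The paper addresses neither.

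One of your patches is flawed in a way that matters for the statement being proved: in your case (ii) you propose to ``solve a DLP'' to recover $(\lambda_i^{-1}\lambda_j)^r$. A DLP is not a polynomial-time computation, so it cannot appear in a proof of Proposition~\ref{modp} (extracting $r$ from such powers is exactly the content of Proposition~\ref{modlcm}, which is kept separate for this reason). It is also unnecessary: evaluating Equation~\eqref{eq-iff} at $(f,h)=(l_0,g_0)$ gives $n_{l_0,g_0}^{(i,j)}=m_{l_0,g_0}^{(i,j)}(\lambda_i^{-1}\lambda_j)^r$, so the offending factor is simply the ratio $n_{l_0,g_0}^{(i,j)}/m_{l_0,g_0}^{(i,j)}$, computable directly; with that repair your case (ii) goes through in polynomial time. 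Your case (i) is harder than ``use a different row, column, or block'': moving within the block reintroduces higher binomial terms $\binom{r}{g_0-g}$ coming from other entries of the adjacent row, so the equation need not remain linear in $r$, and there are genuinely degenerate inputs where no usable equation exists. For instance, if every nonzero entry of $M$ sits in a position whose row block and column block both have size $1$ and all eigenvalues coincide, then $N=M$ for every $r$, so $r\bmod p$ is information-theoretically unrecoverable. Hence the proposition, both as you prove it and as the paper proves it, implicitly requires a nondegeneracy hypothesis on $M$; that gap is the paper's, not one you introduced.
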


\begin{proposition}\label{modlcm}
Computing $r \pmod  {\underset{1\leq i \leq s}{\lcm}\ord(\lambda_i)} $ reduces in polynomial time to solving at most $s^2$ DLPs in $\F_{q^k}$.
\end{proposition}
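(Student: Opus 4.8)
The plan is to leverage the equation~\eqref{equality} obtained just before Proposition~\ref{modp}, which for each block pair $(i,j)$ expresses $n_{l_0,g_0+1}^{(i,j)}$ in the form $r\,m_{l_0,g_0}^{(i,j)}\lambda_j(\lambda_i^{-1}\lambda_j)^r + m_{l_0,g_0+1}^{(i,j)}(\lambda_i^{-1}\lambda_j)^r$. First I would restrict attention to the diagonal blocks $i=j$: there $(\lambda_i^{-1}\lambda_j)^r=1$ and the relation collapses to the linear equation $r\,m_{l_0,g_0}^{(i,i)}\lambda_i^{-1} + m_{l_0,g_0+1}^{(i,i)} = n_{l_0,g_0+1}^{(i,i)}$ with $m_{l_0,g_0}^{(i,i)}\neq 0$ by construction, giving $r \bmod p$ in polynomial time; this is exactly Proposition~\ref{modp}, which I may cite. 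The point of the current proposition is to upgrade this to $r$ modulo $\lcm_i \ord(\lambda_i)$ using the off-diagonal contributions.

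Next, for each ordered pair $(i,j)$ with $i\neq j$, I would rewrite~\eqref{equality} as
\[
\bigl(r\,m_{l_0,g_0}^{(i,j)}\lambda_j + m_{l_0,g_0+1}^{(i,j)}\bigr)(\lambda_i^{-1}\lambda_j)^r = n_{l_0,g_0+1}^{(i,j)}.
\]
Since $r\bmod p$ is already known from the diagonal step, and the coefficient of $(\lambda_i^{-1}\lambda_j)^r$ is a \emph{known} field element once $r\bmod p$ is substituted (the expression $r\,m_{l_0,g_0}^{(i,j)}\lambda_j + m_{l_0,g_0+1}^{(i,j)}$ only depends on $r$ through its residue mod $p$, because $\lambda_i,\lambda_j\in\F_{q^k}$ have characteristic $p$), this isolates $(\lambda_i^{-1}\lambda_j)^r$ as an explicitly computable element of $\F_{q^k}^*$, provided that isolated coefficient is nonzero. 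A DLP with base $\lambda_i^{-1}\lambda_j$ then yields $r$ modulo $\ord(\lambda_i^{-1}\lambda_j)$. Doing this for all $s^2$ block pairs and combining via the generalized CRT of Remark~\ref{gen-crt} recovers $r$ modulo $\lcm_{i,j}\ord(\lambda_i^{-1}\lambda_j)$; I would then argue this lcm equals $\lcm_i \ord(\lambda_i)$, exactly as in the diagonalizable case (Theorem~\ref{diag}), since the $\lambda_i$ themselves appear as $\lambda_i^{-1}\lambda_j$ with one of the indices corresponding to a block with eigenvalue $1$ — or, more carefully, one includes the diagonal pairs which contribute $\ord(\lambda_i)$ directly once one also extracts a pure power-of-$\lambda_i$ equation from a different entry of the diagonal block. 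Actually the cleanest route: from the diagonal block $i=i$, besides the linear equation for $r\bmod p$, one can also read off an equation of the form $(\text{something})\cdot\lambda_i^{?}$ — but since $(\lambda_i^{-1}\lambda_i)^r=1$ this does not directly give $\ord(\lambda_i)$; instead I would note that any eigenvalue $\lambda_i$ with $d_i=1$ reduces to the diagonalizable analysis of Theorem~\ref{diag} applied to that $1\times 1$ block paired against every other block, so the full set $\{\ord(\lambda_i^{-1}\lambda_j)\}_{i,j}$ has lcm equal to $\lcm_i\ord(\lambda_i)$ by the same lcm identity used there.

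The main obstacle is the nonzeroness of the coefficient that multiplies $(\lambda_i^{-1}\lambda_j)^r$ in the off-diagonal equation: after substituting the known value of $r\bmod p$, the quantity $r\,m_{l_0,g_0}^{(i,j)}\lambda_j + m_{l_0,g_0+1}^{(i,j)}$ could in principle vanish, in which case that particular entry yields no information about $r$ beyond its residue mod $p$. I would handle this by observing that one is free to examine \emph{other} entries $(f,h)$ of the block $(i,j)$ via the general formula~\eqref{eq-iff}: every nonzero entry of $M_{i,j}$ ultimately produces a relation of the shape $(\text{polynomial in }r)\cdot(\lambda_i^{-1}\lambda_j)^r = (\text{known})$, and at least one choice gives a leading polynomial coefficient that is a nonzero field element (being a product of a nonzero matrix entry and powers of the nonzero eigenvalues, up to the already-determined residue), so a usable DLP instance always exists for every block pair where $M_{i,j}\neq 0$; block pairs with $M_{i,j}=0$ force $N_{i,j}=0$ and are simply skipped. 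Modulo this bookkeeping — which parallels the index-choosing argument already carried out in the lemmas preceding the proposition — the whole reduction is polynomial-time with at most $s^2 \le n^2$ DLP calls over $\F_{q^k}$, and assembling the residues via Remark~\ref{gen-crt} completes the proof.
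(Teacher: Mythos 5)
Your argument is essentially identical to the paper's proof: use the already-known residue $r'\equiv r\pmod p$ to turn the coefficient $r\,m_{l_0,g_0}^{(i,j)}\lambda_j+m_{l_0,g_0+1}^{(i,j)}$ in Equation~\eqref{equality} into a known field element (since $rm_{l_0,g_0}^{(i,j)}=r'm_{l_0,g_0}^{(i,j)}$ in characteristic $p$), isolate $(\lambda_i^{-1}\lambda_j)^r$ for each off-diagonal block pair, solve a DLP per pair, and assemble the residues with the generalized CRT of Remark~\ref{gen-crt}. Your additional care about the coefficient possibly vanishing and about justifying the identity $\lcm_{i,j}\ord(\lambda_i^{-1}\lambda_j)=\lcm_i\ord(\lambda_i)$ goes beyond what the paper writes down (it silently assumes both), but this is added rigor rather than a different route.
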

\begin{proof}
We write $r =r' + pw $ for $w \in \Z$. From equation \eqref{equality}, since $m_{l_0,g_0}^{(i,j)}  \in \F_{q^k}$, we have \[m_{l_0,g_0+1}^{(i,j)}=r m_{l_0,g_0}^{(i,j)}  \lambda_j (\lambda_i^{-1} \lambda_j)^r +  m_{l_0,g_0+1}^{(i,j)} (\lambda_i^{-1} \lambda_j)^r
= (\lambda_i^{-1} \lambda_j)^r(r' m_{l_0,g_0}^{(i,j)}\lambda_j  +m_{l_0,g_0+1}^{(i,j)} ). \] Equating $ (\lambda_i^{-1} \lambda_j)^r(r' m_{l_0,g_0}^{(i,j)}\lambda_j  +m_{l_0,g_0+1}^{(i,j)} ) = n_{l_0,g_0+1}^{(i,j)} $ for $i \neq j$, we see that $r$ can now be recovered $\pmod{ \ord(\lambda_i^{-1} \lambda_j)}$ by solving a DLP. Repeating this process for all pairs $(i,j)$, and using the generalized CRT (see Remark~\ref{gen-crt}) we get $r $ modulo ${ \underset{1\leq i <j \leq s}{\lcm}\ord(\lambda_i^{-1} \lambda_j)= \underset{1\leq i \leq s}{\lcm}\ord(\lambda_i)} $.\end{proof}

\paragraph{Computing $r \pmod{\theta_Z}. $}
As before, $\theta_Z$ denotes the order of $J_Z$ in the group $GL(\F_{q^k})$. We will now show how Propositions \ref{modp} and \ref{modlcm} allow us to compute $r \pmod {\theta_Z}$. 
We have the following result on the value of $\theta_Z$ from \cite{Menezes1997}. 

\begin{lemma}[\cite{Menezes1997}]
The order $\theta_Z$ of $J_Z$ is $\underset{1\leq i \leq s}{\lcm}(\lambda_i)\cdot p\{t\}$, where $t$ is the largest Jordan block in $J_Z$ and $p\{t\}$ denotes the smallest power of p greater than or equal to $t$.
\end{lemma}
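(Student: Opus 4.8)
The plan is to reduce the problem to a single Jordan block and then split that block into its scalar (semisimple) and unipotent parts. Since $J_Z=\mathrm{diag}(J_1,\dots,J_s)$ is block diagonal, $J_Z^r=\mathrm{diag}(J_1^r,\dots,J_s^r)$, so $\theta_Z=\lcm_{1\le i\le s}\ord(J_i)$ and it suffices to compute $\ord(J_i)$ for each $i$. I will write $d_i$ for the size of $J_i$ and $t=\max_i d_i$ for the size of the largest block.

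First I would write $J_i=\lambda_i I+N_i=\lambda_i\bigl(I+\lambda_i^{-1}N_i\bigr)$, where $N_i$ is the nilpotent shift with $N_i^{d_i}=0$ and $N_i^{d_i-1}\ne 0$; this factorization is legitimate because $Z$, hence $J_Z$ and every $\lambda_i$, is invertible. Setting $U_i:=I+\lambda_i^{-1}N_i$ (a unipotent matrix over $\F_{q^k}$) and using that the scalar matrix $\lambda_i I$ is central, I get $J_i^r=\lambda_i^r\,U_i^r$. Since $U_i^r$ is again unipotent and $\lambda_i^r U_i^r$ has every diagonal entry equal to $\lambda_i^r$, the equation $J_i^r=I$ forces $\lambda_i^r=1$ and $U_i^r=I$ simultaneously, while conversely these two conditions give $J_i^r=\lambda_i^rU_i^r=I$. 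Hence $\ord(J_i)=\lcm\bigl(\ord(\lambda_i),\ord(U_i)\bigr)$.

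Next I would compute $\ord(U_i)$ via the characteristic-$p$ freshman's dream: with $M:=\lambda_i^{-1}N_i$, the identity $(I+M)^{p^a}=I+M^{p^a}$ holds because $p\mid\binom{p^a}{j}$ for $0<j<p^a$. Thus $U_i^{p^a}=I$ precisely when $M^{p^a}=0$, i.e.\ when $p^a\ge d_i$; the least such power is $p\{d_i\}$. Since $\ord(U_i)$ then divides $p\{d_i\}$, it is a power of $p$, and $U_i^{p^a}=I+M^{p^a}\ne I$ whenever $p^a<d_i$ (as $M^{d_i-1}\ne 0$), so $\ord(U_i)=p\{d_i\}$ exactly. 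Combining, $\ord(J_i)=\lcm\bigl(\ord(\lambda_i),p\{d_i\}\bigr)$, and therefore $\theta_Z=\lcm_i\ord(J_i)=\lcm\bigl(\lcm_i\ord(\lambda_i),\ \lcm_i p\{d_i\}\bigr)$. The powers $p\{d_i\}$ are totally ordered by divisibility, so $\lcm_i p\{d_i\}=p\{t\}$; and each $\ord(\lambda_i)$ divides $q^k-1$ and is hence prime to $p$, so $\lcm_i\ord(\lambda_i)$ is coprime to $p\{t\}$ and the outer lcm becomes a product, giving $\theta_Z=\bigl(\lcm_i\ord(\lambda_i)\bigr)\cdot p\{t\}$.

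The only genuinely non-routine point is pinning down the order of the unipotent factor exactly — showing $\ord(U_i)=p\{d_i\}$ and not merely that it divides $p\{d_i\}$ — which rests on the freshman's-dream identity together with the hypothesis $N_i^{d_i-1}\ne0$; I also want to be careful to invoke the $p$-coprimality of the $\ord(\lambda_i)$ at the very end so that the final lcm collapses to a product. Everything else (the block-diagonal reduction, the scalar/unipotent split, monotonicity of $p\{\cdot\}$) is bookkeeping and can be stated briefly.
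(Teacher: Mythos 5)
Your argument is correct. Note that the paper does not prove this lemma at all --- it is quoted from the cited reference \cite{Menezes1997} --- so there is no in-paper proof to compare against; your derivation (block-diagonal reduction, the multiplicative split $J_i=\lambda_i U_i$ into scalar and unipotent parts, the freshman's-dream computation $U_i^{p^a}=I+M^{p^a}$ pinning down $\ord(U_i)=p\{d_i\}$ exactly, and the coprimality of $\lcm_i\ord(\lambda_i)$ with $p$ collapsing the final lcm to a product) is the standard one and fills in the omitted details cleanly. The only cosmetic point is that the lemma's statement writes $\lcm(\lambda_i)$ where it means $\lcm(\ord(\lambda_i))$, which you have correctly interpreted.
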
 

Below, we show that if we can compute $r \pmod p$, then by extension we can find $r \pmod {p\{t\}}$.

\begin{lemma}\label{lifting}
 Suppose that an Algorithm A returns $r \pmod p$ for an equation of the form $Z^{-r} X Z^r=Y$. Then, for any $v \geq 1 $, one may find $r \pmod {p^v}$ with $v$ applications of Algorithm A.
\end{lemma}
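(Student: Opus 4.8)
The plan is to lift the solution $p$-adically, one digit at a time, by repeatedly feeding a modified equation into Algorithm A. Suppose Algorithm A applied to $Z^{-r}XZ^r = Y$ returns $r_1 := r \bmod p$. I would like to recover the next digit $r_2$ with $r \equiv r_1 + p r_2 \pmod{p^2}$. Write $r = r_1 + p s$, where $s$ is the unknown I now want modulo $p^{v-1}$. Substituting, the original equation becomes $Z^{-(r_1 + ps)} X Z^{r_1 + ps} = Y$, i.e. $(Z^p)^{-s}\,(Z^{-r_1} X Z^{r_1})\, (Z^p)^{s} = Y$. Setting $X' := Z^{-r_1} X Z^{r_1}$ (which the solver can compute, since $r_1$ is known) and $Z' := Z^p$, this is again an instance of the same problem, $Z'^{-s} X' Z'^{s} = Y$, now with unknown exponent $s$. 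Applying Algorithm A to this instance returns $s \bmod p$, which is exactly $r_2$; after $v$ such applications we have recovered $r \bmod p^v$.

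The steps, in order, are: (1) run Algorithm A on $(Z,X,Y)$ to get $r^{(1)} := r \bmod p$; (2) having $r^{(j)} := r \bmod p^j$, form $X_j := Z^{-r^{(j)}} X Z^{r^{(j)}}$ and $Z_j := Z^{p^{j}}$, so that the residual unknown $s_j$ with $r = r^{(j)} + p^{j} s_j$ satisfies $Z_j^{-s_j} X_j Z_j^{s_j} = Y$; (3) run Algorithm A on $(Z_j, X_j, Y)$ to obtain $s_j \bmod p$, and set $r^{(j+1)} := r^{(j)} + p^{j}(s_j \bmod p)$; (4) iterate steps (2)--(3) until $j = v$, and output $r^{(v)} = r \bmod p^v$. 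Each iteration costs one call to Algorithm A plus a bounded number of matrix multiplications and fast exponentiations (computing $Z^{p^j}$ incrementally by squaring $p$ times, or rather raising to the $p$-th power $j$ times), all polynomial-time; hence $v$ applications of Algorithm A suffice, as claimed.

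The one point requiring care — and the main obstacle to a clean argument — is that Algorithm A, per its specification, returns the exponent only modulo $p$, and a priori an instance $Z_j^{-s_j} X_j Z_j^{s_j} = Y$ need not have a unique solution even modulo $p$; moreover one must check that $s_j \bmod p$ is genuinely well-defined, i.e. that every valid exponent in the residual equation is congruent mod $p$. This is guaranteed because $Z_j = Z^{p^j}$ has order $\theta_Z / \gcd(\theta_Z, p^j)$, and the ambiguity in $s_j$ is exactly modulo that order; what we actually need is only that the value $s_j \bmod p$ returned is consistent with the true $r$, which holds since the true $r$ does yield a valid solution $s_j = (r - r^{(j)})/p^j$ of the residual equation, and Algorithm A by hypothesis returns *a* correct residue mod $p$ for *some* solution — so one should either assume Algorithm A returns the residue of a solution compatible with a fixed lift, or argue that all solutions of the residual equation share the same residue mod $p$ whenever the original instance does. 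I would state the lemma's hypothesis so that Algorithm A returns $r \bmod p$ for the specific $r$ of interest, which makes the induction go through immediately; the bookkeeping of which solution is tracked is the only genuinely delicate part, the rest being the routine substitution $Z \mapsto Z^{p^j}$, $X \mapsto Z^{-r^{(j)}}XZ^{r^{(j)}}$.
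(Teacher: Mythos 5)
Your proposal is correct and follows essentially the same $p$-adic digit-by-digit lifting as the paper: write $r = r_0 + p s$, replace $Z$ by $Z^p$, absorb the known digit by a conjugation (the paper conjugates $Y$ to $Z^{r_0} Y Z^{-r_0}$ while you equivalently conjugate $X$ to $Z^{-r_0} X Z^{r_0}$), and recurse, using one call to Algorithm A per digit. Your extra remarks on the well-definedness of the returned residue are a more careful reading of the hypothesis than the paper gives, but the argument is the same.
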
 \begin{proof}
 We write $r = r_0 + r_1p+ \ldots + r_{v-1}p^{v-1} \pmod{p^v}$. With one application of Algorithm A, one finds $r \pmod p = r_0$, and then compute $Z_1=Z^p, \ Y_1=Z^{r_0} Y Z^{-r_0}, \ r'=r_1+r_2p+\ldots + r_{v-1}p^{v-2}$. Then, we have an equation $Z_1^{-r'}XZ_1^{r'} = Y_1$. One now uses Algorithm A again to find $r' \pmod p = r_1$. In $v$ applications of Algorithm A, we recover $r \pmod{p^v}$. 
\end{proof}

Finally, we can prove the final result of this section.

\begin{theorem}\label{nondiag} Let $J_Z$ be non-diagonal, and composed of $s$ Jordan blocks. Then, the computation of $r$ from Protocol~\ref{mat-prot} is polynomial time reducible to a set of $s^2$ DLPs over $\F_{q^k}$.
\end{theorem}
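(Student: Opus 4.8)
The plan is to assemble the pieces already in place --- the reduction to Jordan form carried out at the start of Section~\ref{mat-csp}, Propositions~\ref{modp} and~\ref{modlcm}, Lemma~\ref{lifting}, and the order formula $\theta_Z=\underset{1\le i\le s}{\lcm}\ord(\lambda_i)\cdot p\{t\}$ --- and to glue the congruences they produce via the Chinese Remainder Theorem. Since breaking Protocol~\ref{mat-prot} only requires $r$ modulo $\theta_Z$, and $\theta_Z$ factors as the product of $\lcm_i\ord(\lambda_i)$ (a divisor of $q^k-1$, hence coprime to $p$) and the $p$-power $p\{t\}=p^{v}$ with $v=\lceil\log_p t\rceil$, it is enough to recover $r$ modulo each factor separately and then combine.

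First I would apply Proposition~\ref{modlcm}: in the non-diagonal case this reduces the computation of $r\pmod{\lcm_i\ord(\lambda_i)}$ to at most $s^{2}$ DLPs over $\F_{q^k}$ (these will be the only DLPs in the argument). Next, Proposition~\ref{modp} provides a genuinely polynomial-time ``Algorithm A'' that, given an equation $Z^{-r}XZ^{r}=Y$ with non-diagonalizable invertible coefficient matrix, returns $r\pmod p$. I would feed this Algorithm A into Lemma~\ref{lifting} to lift $r\pmod p$ to $r\pmod{p^{v}}$ in $v=O(\log n)$ iterations, each polynomial time and using no DLP. Finally, since $\gcd(\lcm_i\ord(\lambda_i),p)=1$, the CRT merges the two congruences into $r$ modulo $\lcm_i\ord(\lambda_i)\cdot p^{v}=\theta_Z$, which is what is needed; the total DLP count stays at $s^{2}$.

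The step that requires genuine care, and that I expect to be the main obstacle, is verifying the hypothesis of Lemma~\ref{lifting} at \emph{each} of its iterations: the equations it produces have coefficient matrices $Z^{p^{j}}$ for $j=0,1,\dots,v-1$, and Algorithm A (Proposition~\ref{modp}) is only valid when the coefficient matrix is non-diagonalizable. I would dispatch this with a short computation in characteristic $p$: for one Jordan block, $(\lambda I+N)^{p}=\lambda^{p}I+N^{p}$, and the nilpotent $N^{p}$ has index $\lceil d/p\rceil$ when $N$ has size $d$; iterating, the largest Jordan block of $Z^{p^{j}}$ has size at least $\lceil t/p^{j}\rceil$ (coincidences $\lambda_i^{p^{j}}=\lambda_{i'}^{p^{j}}$ only merge generalized eigenspaces, they never shrink blocks). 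Because $t>p^{v-1}$, this is at least $2$ for every $j\le v-1$, so $Z^{p^{j}}$ stays non-diagonalizable throughout the lift and Algorithm A remains applicable. At each iteration one must also recompute, in polynomial time (Algorithm~1 of \cite{Menezes1997}), the change of basis bringing $Z^{p^{j}}$ to Jordan form so that Proposition~\ref{modp} can be run in block coordinates; this adds only polynomial overhead and no DLPs. The residual bookkeeping --- the two moduli being coprime and multiplying to exactly $\theta_Z$ --- is immediate from the order formula.
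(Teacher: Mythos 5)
Your argument is the same as the paper's: Propositions~\ref{modp} and~\ref{modlcm} supply $r \bmod p$ and $r \bmod \lcm_{1\le i\le s}\ord(\lambda_i)$ at the cost of at most $s^2$ DLPs, Lemma~\ref{lifting} lifts the first congruence to $r \bmod p^{v}$ with $p^v=p\{t\}$, and the generalized CRT combines the two coprime moduli into $r \bmod \theta_Z$. Your extra verification that each iterate $Z^{p^{j}}$ in the lifting step remains non-diagonalizable (via $(\lambda I+N)^{p}=\lambda^{p}I+N^{p}$ in characteristic $p$ and the block-size bound $\lceil t/p^{j}\rceil\ge 2$ for $j\le v-1$), so that the ``Algorithm A'' of Proposition~\ref{modp} stays applicable throughout, is correct and settles a point the paper's proof leaves implicit.
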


\begin{proof}
 From Propositions \ref{modp} and \ref{modlcm}, we may compute $r \pmod {\underset{1\leq i \leq s}{\lcm}\lambda_i}$ and $r'=r \pmod p$ with the same time complexity as solving $s^2$ DLPs in $\fq$. Note that since the multiplicative order of any element of $\F_{q^k}^\times$ divides $q^k-1$, the values of $r'=r \pmod p$ and $r \pmod {\underset{1\leq i \leq s}{\lcm}\ord(\lambda_i)} $ are obtained independently. Now, write $p\{t\}=p^v\leq q^k$, by Lemma \ref{lifting}, we can obtain $r \pmod {p^v}$ from $r'$ in polynomial time $\mathcal{O}(k\log q)$. Combining $r \pmod {p^v}$ and $r \pmod {\underset{1\leq i \leq s}{\lcm}\lambda_i}$ using the Chinese Remainder Theorem, we get $r \pmod{\theta_Z}$, as required. It is also clear that every step apart from the DLPs has polynomial time complexity. \end{proof}

We may thus summarize the results of Theorems~\ref{diag} and \ref{nondiag} as follows.

\begin{corollary}\label{mat-main}
The $\langle Z \rangle$-restricted CSP in $\GL_n(\fq)$ reduces in polynomial time to $\mathcal{O}(n^2)$ DLPs over a small extension $\F_{q^k}$ of $\fq$. 
\end{corollary}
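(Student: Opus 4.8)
The plan is to assemble Corollary~\ref{mat-main} directly from the two cases already established, treating the diagonalizable and non-diagonalizable cases as a dichotomy on the Jordan form $J_Z$ of $Z$. First I would recall the reduction to Jordan form: given the public data $X$, $Z$, $Y$ with $Z^{-r}XZ^r = Y$, the adversary computes the smallest extension $\F_{q^k}$ containing all eigenvalues of $Z$ (with $k$ polynomial in $n$ and $q$ by \cite{Menezes1997}) together with the transition matrix $P$ and $J_Z = PZP^{-1}$; conjugating $X$ and $Y$ by $P$ yields $M = PXP^{-1}$, $N = PYP^{-1}$, and since conjugation by $P$ is a bijection, $Z^{-r}XZ^r = Y \iff J_Z^{-r}MJ_Z^r = N$. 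Moreover $\ord(Z) = \ord(J_Z) = \theta_Z$, so recovering $r \pmod{\theta_Z}$ from the Jordan-form equation suffices to break Protocol~\ref{mat-prot}.

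Next I would split on whether $J_Z$ is diagonal. If it is, Theorem~\ref{diag} gives a reduction of the recovery of $r$ to at most $n^2$ DLPs over $\F_{q^k}$, and we are done in this case. If $J_Z$ is not diagonal, say composed of $s \leq n$ Jordan blocks, then Theorem~\ref{nondiag} reduces the recovery of $r$ to at most $s^2 \leq n^2$ DLPs over $\F_{q^k}$. In either case the number of DLPs is $\mathcal{O}(n^2)$, and every auxiliary computation — the Jordan form and transition matrix, the conjugations, the order computations, the generalized CRT of Remark~\ref{gen-crt}, and the $p$-adic lifting of Lemma~\ref{lifting} — runs in time polynomial in $n$, $q$, and $\log(\text{order})$; I would state this explicitly so that ``reduces in polynomial time'' is justified and not merely asserted. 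One small bookkeeping point worth a sentence: the DLPs live over $\F_{q^k}$ rather than $\fq$, so the statement should (and does) say ``over a small extension $\F_{q^k}$ of $\fq$''; I would note $k$ is polynomially bounded so this is a genuinely small extension.

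Since the two constituent theorems do essentially all the work, I do not expect a serious obstacle here; the main thing to be careful about is the aggregation of complexities and the interface between the cases. In particular, in the non-diagonal case the structure is subtler: Theorem~\ref{nondiag} recovers $r$ modulo $\lcm_i \ord(\lambda_i)$ from the off-diagonal-block DLPs (Proposition~\ref{modlcm}) and separately recovers $r \pmod p$ from the diagonal-block linear equation (Proposition~\ref{modp}), then lifts the latter to $r \pmod{p\{t\}}$ via Lemma~\ref{lifting} and combines via CRT to obtain $r \pmod{\theta_Z}$ using the order formula $\theta_Z = \lcm_i \ord(\lambda_i)\cdot p\{t\}$ from \cite{Menezes1997}. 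The one genuinely delicate point to flag is that this CRT combination requires $\gcd(\lcm_i \ord(\lambda_i),\ p\{t\})=1$, which holds because each $\ord(\lambda_i)$ divides $q^k-1$ and is therefore coprime to $p$; I would make that coprimality remark explicitly, since it is what makes the two pieces of information about $r$ independent and combinable. With that in hand the corollary follows by simply taking the worse of the two bounds, $\mathcal{O}(n^2)$, and concatenating the polynomial-time steps.
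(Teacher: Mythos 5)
Your proposal is correct and follows essentially the same route as the paper: the corollary is obtained by combining the Jordan-form reduction with Theorem~\ref{diag} in the diagonal case and Theorem~\ref{nondiag} (with $s\leq n$, hence $s^2\leq n^2$ DLPs) in the non-diagonal case. Your added remarks on the polynomial bound for $k$ and the coprimality of $\underset{i}{\lcm}\,\ord(\lambda_i)$ with $p\{t\}$ just make explicit what the paper already uses in the proof of Theorem~\ref{nondiag}.
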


\begin{remark}
While we have only discussed the search variant of the conjugacy problem, it is not hard to show that the arguments of this section also reduce the decisional variant of the CSP to a set of corresponding decisional versions of the DLP and generalized CRT. Furthermore, this may be used, along with a collision-type square-root algorithm, to solve any $A$-restricted CSP in $\GL_n(\fq)$, where $A$ is abelian. 
\end{remark}

\section{Applications to Cryptanalysis}\label{sec-cryptanalysis}

\subsection{Decomposition over Generalized Quaternions}

\paragraph{CSP in $Q_{2^n}$} Recall the presentation of $Q_{2^n}$ in \ref{quatpres}. One easily derives the relation ${y^j}{x^i} = x^{i(-1)^j}y^j $ for all $i,j \in \Z$. Suppose we have a CSP instance \[(x^i y)^{-1} (x^a y) (x^i y) = x^A y\] and want to solve for $i$. We have, \[(y^{-1} x^{-i}) (x^a y) (x^i y )= y^{-1} x^{a-2i} y^2 =  x^{2i-a}  y.\] Thus, the exponent $i$ is found by solving  $2i -a = A \pmod N$. Note that \[(x^i y)^{-1} (x^a) (x^i y) = y^{-1} x^a y = x^{-a},\] so in this case any value of $i$ is a valid solution. Similarly, since $x$ lies in the center of $Q_{2^n}$, the $\langle x\rangle$-restricted CSP is trivial.

 We now describe a cryptanalysis of Protocol~\ref{compdec1}, which is based on the decomposition problem. 
  \begin{proposition}
Protocol~\ref{compdec1} can be broken in polynomial time by retrieving $a$ and $c$, which reduces to a system of linear equations over $\Z_N$. 

\end{proposition}
  \begin{proof}
  We assume that $a \not\in \langle x \rangle $, since otherwise, finding $a$ is trivial. Similarly, assume $c \not\in \langle x \rangle$.     Write $a = x^A y$, $c=x^C y$, $d_i = x^{D_i}, \ b_i = x^{B_i}, 1\leq i \leq 2$. 
    We have, by collection, \begin{align*}
        x_1 = x^{n_1}y = x^{B_1+A-B_2}y, \quad  x_2=x^{n_2}y =x^{D_1+n_1-D_2}y, \quad x_3 = x^{n_3}y =x^{D_1+A-D_2} y, \\ y_1= x^{m_1}y = x^{D_1+C-D_2}y, \quad
        y_2=x^{m_2}y  =x^{B_1+m_1-B_2}y, \quad
       y_3= x^{m_3}y =x^{B_1+C-B_2}y
    \end{align*}
The adversary sees the $x_i$'s and $y_i$'s, and thus also the $m_i$'s and $n_i$'s, for $1\leq i \leq 3$, and needs to solve linear equations for $A, C$, $D_i$'s, and $B_i$'s. The result is now clear. 
 \end{proof}

Observe that the discussion of this example is applicable in any group of the form $\langle x \rangle \rtimes \langle y \rangle, y^2 \in \langle x \rangle$.
Another notable example is the dihedral group $D_{2n}$.

\subsection{Subgroup Conjugacy Search in Quaternions modulo $p$}
We show how the cryptanalysis for Protocol~\ref{quart} reduces to breaking Protocol~\ref{mat-prot} for $R=\Mat_2(\F_p)$. First, by the arguments in \cite{hurwitz}, we have $H/pH = L/pL = H_p$. Now, by Proposition 3.3 of \cite{hurwitz}, for any integers $a$ and $b$ satisfying $a^2 + b^2 \equiv -1 \pmod p$, the map 
\begin{align*}
    \phi_{a,b}: H/pH & \rightarrow \Mat_2(\Z/p\Z) \\
    \gamma_1 + \gamma_2i + \gamma_3j + \gamma_4 k & \mapsto \begin{pmatrix}
    \gamma_1 + \gamma_2 a + \gamma_4b & \gamma_3 + \gamma_4a - \gamma_2b \\
    -\gamma_3 + \gamma_4a- \gamma_2b & \gamma_1 - \gamma_2a - \gamma_4b.
    \end{pmatrix}
\end{align*}
is an isomorphism of rings. Clearly, the inverse of $\phi_{a,b}$ is also easily computed as

\begin{multline}
\phi_{a,b}^{-1}\left( \begin{pmatrix}
    A & B \\
    C & D
    \end{pmatrix} \right)=\\
    \frac{1}{2}(A+D) + \frac{1}{2}\left(b(B+C)-a(A-D)\right)i + \frac{1}{2}(B-C)j -\frac{1}{2}\left(a(B+C)+b(A-D)\right)k.  
    \end{multline}

Thus, Protocol~\ref{quart} may be treated as if it is over $\Mat_2(\F_p)$. Now, note that the public keys are of the form $y=z^{-r} x^s z^r $ where $r$ and $s$ are private integers, and $x\in \Mat_2(\F_p)$, $z \in \GL_2(\F_p)$ are public matrices. The only difference now with Protocol~\ref{mat-prot} is that this scheme has two secret integers instead of one. However, we observe that the presence of two secret integers weakens the scheme, because an adversary can break the system if they find any pair of solutions $(r,s)$ such that $K_A = z^r x^s x^{-r}$. So, it makes sense to fix $s$, without loss of generality, to $s=1$. Thus, Corollary \ref{mat-main} for the case $n=2$ now shows that finding $r$ and breaking Protocol~\ref{quart} reduces to at most four DLPs.

\subsection{Subgroup Conjugacy Search in Matrix Groups}

We first recall that the SCSP corresponds exactly to the $A$-restricted CSP defined in this paper, for a cyclic subgroup $A$ of $G$. Therefore, Corollary~\ref{mat-main} gives a direct cryptanalysis of Protocol~\ref{mat-prot}, reducing its security to that of a set of $\mathcal{O}(n^2)$ DLPs over a small extension of $\fq$. 

While the authors state that the SCSP is at least as hard as the CSP, we remark that this is likely not true in general. In Section~\ref{pc-csp} we showed that in finite polycyclic group with two generators, a well-chosen SCSP is harder than the CSP, whereas by Remark~\ref{inf-rem} the SCSP in certain infinite polycyclic groups with two generators is seemingly easier than the CSP. Similarly, Section~\ref{mat-csp} gives a complete reduction of the SCSP in $\GL_n(\fq)$ to a set of DLPs, but a solution to the general CSP is unclear. This may be intuitively realized, since while there is an added constraint in the SCSP, the adversary also has more information on where to search for the conjugator.

\section{Conclusion}

In this paper, we discussed the complexity of the conjugacy search problem (CSP) in two important classes of platform groups for nonabelian group-based cryptography, using the protocols in \cite{Gu2014ConjugacySB}, \cite{cmpdec}, and \cite{quat} as our starting point. We produced a polynomial time solution for the CSP in a finite polycyclic group with two generators, and showed that a restricted CSP is reducible to a DLP. In matrix groups over finite fields, we used the Jordan decomposition of a matrix to produce a polynomial time reduction of an $A$-restricted CSP, where $A \subseteq GL(\fq)$ is a cyclic subgroup, to a set of DLPs over an extension of $\fq$. We then used these general methods and results to describe concrete cryptanalysis algorithms for the systems proposed in \cite{Gu2014ConjugacySB}, \cite{cmpdec}, and \cite{quat}. More generally, a direct consequence of our results is that the security of a protocol based on an $A$-restricted CSP, where $A$ is cyclic, which uses a linear platform group, essentially depends on the difficulty of computing a representation of the platform and a set of DLPs. We believe that our methods and findings are likely to allow for several other heuristic attacks in the general case.

\bibliographystyle{plain}
\bibliography{references}

\begin{thebibliography}{10}

\bibitem{an99}
Iris Anshel, Michael Anshel, and Dorian Goldfeld.
\newblock An algebraic method for public-key cryptography.
\newblock {\em Math. Res. Lett.}, 6(3-4):287--291, 1999.

\bibitem{ben2018cryptanalysis}
Adi Ben-Zvi, Arkadius Kalka, and Boaz Tsaban.
\newblock Cryptanalysis via algebraic spans.
\newblock In {\em Annual International Cryptology Conference}, pages 255--274.
  Springer, 2018.

\bibitem{cavallo}
Bren Cavallo and Delaram Kahrobaei.
\newblock A family of polycyclic groups over which the uniform conjugacy
  problem is np-complete.
\newblock {\em International Journal of Algebra and Computation},
  24(04):515--530, 2014.

\bibitem{li-bin}
Li~Chen and Bin Fu.
\newblock Linear and sublinear time algorithms for basis of abelian groups.
\newblock In {\em Proceedings of the 20th International Symposium on Algorithms
  and Computation}, ISAAC '09, page 493–502, Berlin, Heidelberg, 2009.
  Springer-Verlag.

\bibitem{eick2004polycyclic}
Bettina Eick and Delaram Kahrobaei.
\newblock Polycyclic groups: a new platform for cryptology?
\newblock {\em arXiv preprint arXiv:0411077}, 2004.

\bibitem{fine2011aspects}
Benjamin Fine, Maggie Habeeb, Delaram Kahrobaei, and Gerhard Rosenberger.
\newblock Aspects of nonabelian group based cryptography: a survey and open
  problems.
\newblock {\em arXiv preprint arXiv:1103.4093}, 2011.

\bibitem{Freeman2004}
David Freeman.
\newblock The discrete logarithm problem in matrix groups.
\newblock 2004.

\bibitem{GEBHARDT2002213}
Volker Gebhardt.
\newblock Efficient collection in infinite polycyclic groups.
\newblock {\em Journal of Symbolic Computation}, 34(3):213--228, 2002.

\bibitem{gryak2016status}
Jonathan Gryak and Delaram Kahrobaei.
\newblock The status of polycyclic group-based cryptography: A survey and open
  problems.
\newblock {\em Groups Complexity Cryptology}, 8(2):171--186, 2016.

\bibitem{Gu2014ConjugacySB}
Lize Gu and Shihui Zheng.
\newblock Conjugacy systems based on nonabelian factorization problems and
  their applications in cryptography.
\newblock {\em J. Appl. Math.}, 2014:630607:1--630607:10, 2014.

\bibitem{braid1}
Dennis Hofheinz and Rainer Steinwandt.
\newblock A practical attack on some braid group based cryptographic
  primitives.
\newblock In {\em Public Key Cryptography --- PKC 2003}, pages 187--198,
  Berlin, Heidelberg, 2002. Springer Berlin Heidelberg.

\bibitem{holt2005handbook}
Derek~F Holt, Bettina Eick, and Eamonn~A O'Brien.
\newblock {\em Handbook of computational group theory}.
\newblock CRC Press, 2005.

\bibitem{ilic}
Ilic.
\newblock {\em Discrete logs in arbitrary finite groups}.
\newblock PhD thesis, Florida Atlantic University, 2008.

\bibitem{kashyap}
Sunil~Kumar Kashyap, Birendra Sharma, and Amitabh Banerjee.
\newblock A cryptosystem based on {DLP} $\gamma = \alpha^a \beta^b \mod p$.
\newblock {\em International Journal of Network Security}, 3:95--100, 01 2006.

\bibitem{klingler2009discrete}
Lee~C Klingler, Spyros~S Magliveras, Fred Richman, and Michal Sramka.
\newblock Discrete logarithms for finite groups.
\newblock {\em Computing}, 85(1-2):3, 2009.

\bibitem{ko2000new}
Ki~Hyoung Ko, Sang~Jin Lee, Jung~Hee Cheon, Jae~Woo Han, Ju-sung Kang, and
  Choonsik Park.
\newblock New public-key cryptosystem using braid groups.
\newblock In {\em Annual International Cryptology Conference}, pages 166--183.
  Springer, 2000.

\bibitem{kreuzer2014linear}
Martin Kreuzer, Alexey~D Myasnikov, and Alexander Ushakov.
\newblock A linear algebra attack to group-ring-based key exchange protocols.
\newblock In {\em International Conference on Applied Cryptography and Network
  Security}, pages 37--43. Springer, 2014.

\bibitem{maze}
Gérard Maze, Chris Monico, and Joachim Rosenthal.
\newblock Public key cryptography based on semigroup actions.
\newblock {\em Adv. in Math. of Communications}, 1(4):489--507, 2007.

\bibitem{Menezes1997}
Alfred Menezes and Yihong Wu.
\newblock The discrete logarithm problem in {$GL(n, q)$}.
\newblock {\em Ars Comb.}, 47, 1997.

\bibitem{meshram}
Chandrashekhar Meshram.
\newblock A cryptosystem based on double generalized discrete logarithm
  problem.
\newblock {\em Int. J. Contemp. Math. Sciences}, 6:285--297, 01 2011.

\bibitem{myasnikov2015linear}
Alexei Myasnikov and Vitaliĭ Roman'kov.
\newblock A linear decomposition attack.
\newblock {\em Groups Complexity Cryptology}, 7(1):81--94, 2015.

\bibitem{braid2}
Alexei Myasnikov, Vladimir Shpilrain, and Alexander Ushakov.
\newblock Random subgroups of braid groups: An approach to cryptanalysis of a
  braid group based cryptographic protocol.
\newblock In {\em Public Key Cryptography - PKC 2006}, pages 302--314, Berlin,
  Heidelberg, 2006. Springer Berlin Heidelberg.

\bibitem{myasnikov2008group}
Alexei Myasnikov, Vladimir Shpilrain, and Alexander Ushakov.
\newblock {\em Group-based cryptography}.
\newblock Springer Science \& Business Media, 2008.

\bibitem{shorey_tijdeman_1986}
T.~N. Shorey and R.~Tijdeman.
\newblock {\em Exponential Diophantine Equations}.
\newblock Cambridge Tracts in Mathematics. Cambridge University Press, 1986.

\bibitem{shpilrain2005new}
Vladimir Shpilrain and Alexander Ushakov.
\newblock A new key exchange protocol based on the decomposition problem.
\newblock {\em arXiv preprint arXiv:0512140}, 2005.

\bibitem{cmpdec}
Chang~Seng Sin and Huey~Voon Chen.
\newblock Group-based key exchange protocol based on complete decomposition
  search problem.
\newblock In {\em Information Security Practice and Experience}. Springer
  International Publishing, 2019.

\bibitem{tsaban2015polynomial}
Boaz Tsaban.
\newblock Polynomial-time solutions of computational problems in
  noncommutative-algebraic cryptography.
\newblock {\em Journal of Cryptology}, 28(3):601--622, 2015.

\bibitem{hurwitz}
Nikolaos Tsopanidis.
\newblock {\em The Hurwitz and Lipschitz Integers and Some Applications}.
\newblock PhD thesis, Universidade do Porto, 2020.

\bibitem{quat}
Maheswara~Rao Valluri and Shailendra~Vikash Narayan.
\newblock Quaternion public key cryptosystems.
\newblock In {\em 2016 World Congress on Industrial Control Systems Security
  (WCICSS)}, pages 1--4, 2016.

\bibitem{VaughanLee1990CollectionFT}
Michael~R. Vaughan-Lee.
\newblock Collection from the left.
\newblock {\em J. Symb. Comput.}, 9:725--733, 1990.

\end{thebibliography}

\end{document}